\def\d{\displaystyle}
\def\R{\mathbb{R}}
\def\taubar{\overline{\tau}}
\def\begeq{\begin{equation}}
\def\endeq{\end{equation}}
\def\bc{\begin{center}}
\def\ec{\end{center}}
\title{On the 
mean square displacement in 
L\'evy walks}
\author{Christoph B\"{o}rgers\footnotemark[2] and Claude Greengard\footnotemark[3]}
\begin{document}
\maketitle

\renewcommand{\thefootnote}{\fnsymbol{footnote}}
\footnotetext[2]{Department of Mathematics, Tufts University,  Medford, MA 02155. 
email: cborgers@tufts.edu}
\footnotetext[3]{Two Sigma Investments, LP, and Courant Institute of Mathematical Sciences, New York University, New York, NY 10012. email:cg3197@nyu.edu}
\renewcommand{\thefootnote}{\arabic{footnote}}

\begin{abstract}
Many physical and biological processes are modeled by ``particles" undergoing L\'evy random walks. A feature of significant interest in these systems is the mean square displacement (MSD) of the particles. Long-time asymptotic approximations of the MSD have been established, via the Tauberian Theorem, for systems in which the distribution of the step durations is asymptotically a power law of infinite variance. We extend these results, using elementary analysis, and obtain closed-form expressions as well as power law bounds for the MSD in equilibrium, and representations of the MSD as sums of super-linear, linear, and sub-linear terms. We show that the super-linear components are determined by the mean and asymptotics of the step durations, but that the linear and sub-linear components (whose size has implications for the accuracy of the asymptotic approximation) depend on the entire distribution function. 
\end{abstract}

\begin{keywords} L\'evy random walk, mean square displacement,  super-diffusion, anomalous diffusion
\end{keywords} 

\begin{AMS}60G51, 82C70\end{AMS}

\pagestyle{myheadings}
\thispagestyle{plain}
\markboth{C.\ B\"{o}rgers and C.\ Greengard}{Mean square displacement in L\'evy walks}

\section{Introduction} We consider a random walker in $\R^d$ moving at random velocities  
in straight line segments of random duration. At the end of each flight segment, a new velocity and duration are chosen at random. We assume that the durations and velocities are independent. Denote by $F$ the distribution function of the duration. When $F$ has infinite variance, the case of primary focus in this
paper, this process is called a L\'evy walk \cite{klafter1987stochastic, klafter_sokolov,Shlesinger_2001,Shlesinger_et_al, Zaburdaev2015levy}. A swarm of 
random walkers starting at the same location and performing independent L\'evy walks exhibits super-diffusion, 
that is, faster than linear growth in the {\em mean square displacement} (MSD) --- the mean of the square of their distances from the origin.
One finds  many discussions of 
physical and biological applications of L\'evy walks and related
random processes leading to super-diffusion in the literature. Our own 
interest in the subject started when we modeled the evolution of a rarefied gas in an 
ultra-thin planar channel as a succession of infinite-variance flights and showed that the molecular density approaches a Gaussian distribution with a variance which grows super-linearly  \cite{BGT,Dogbe,Mellet_et_al}. 
Other examples of
applications include certain kinds of transport in fluid flow \cite{Shlesinger_et_al,Zaks_et_al}, 
transport in biological cells \cite{Bressloff_book,Song_et_al}, the 
migration of bacteria
\cite{Bacteria_Levy_walk}, predator search behavior \cite{Sims:2008de}, and traveling humans \cite{Human_Travel}. 
Also, Lorentz gases \cite{Dolgopyat,Zarfaty:2018tz} and other billiards problems \cite{Balint_Dolgopyat}, in which there is no
randomness in a strict sense but, rather, deterministic chaos, have been approximated by super-diffusive
 random walks.

The most commonly studied L\'evy walks are those in which the speeds of the walkers are fixed. However, there are many interesting examples in which both the directions and the speeds of the flights are random \cite{zaburdaev2008random}. For example, in the gas flow between horizontal plates with Maxwellian reflection conditions mentioned above, individual molecules undergo random flights whose durations and horizontal velocities are independent, with durations of infinite variance and finite expected square speed \cite{BGT}. By allowing speeds to be random, but assuming that the expectation of the squared speed is finite, we include this and other examples without complicating the analysis beyond that of the single-speed case.

A fundamental quantity characterizing super-diffusion is
the MSD. The common approach to analyzing the MSD is to formulate
an equation describing the density of the walker location as a function of time and then to take a Fourier transform with respect to the space variables and a Laplace transform with respect to the time variable. One can then use the Tauberian theorem to deduce the long-time asymptotics of the MSD from the behavior of derivatives with respect to the Fourier variable near 0 
\cite{Geisel:1985ir, klafter_sokolov,Wang:1992gv}. 

An alternative and more elementary approach is to express the MSD as an integral over the velocity
auto-correlation, and evaluate or analyze the integral; see for instance \cite{Balint_Dolgopyat,dechant2014scaling}. 
We present a mathematical analysis of the basic properties of the MSD, using this approach, under the assumption that 
$F$ has finite expectation. 

While the standard approach is very useful in providing insights into various aspects of the {\em asymptotic} behavior of L\'evy walks, the more elementary approach taken in this paper allows the derivation of {\em exact},
explicit expressions, valid for all times, for the MSD, for important classes of problems (Section 4). This enables detailed understanding of the accuracy of 
asymptotic approximations for the MSD (Section 6). In other examples, when exact expressions cannot be obtained, our approach permits the derivation of {\em time-dependent   bounds}, valid for all times (Corollary 7.4).

We will distinguish between the ``equilibrium" MSD, $M_{eq}$, and the ``transitional'' MSD, $M_{tr}$. Precise definitions
are given in Section \ref{sec:background}. The difference between the two cases
lies in the interpretation of ``$t=0$". In the
equilibrium case, 
$t=0$ should be thought of as the time at which we start watching a random walk that has
been going on for a long (strictly speaking, infinite) time. In the transitional case, 
the walk begins at time $t=0$. The equilibrium case is easier to analyze since the MSD is then expressible as a simple convolution integral, and we focus on this case.

Long-time asymptotic approximations have been derived for various duration distributions, for both $M_{tr}$ and $M_{eq}$. These approximations depend only on the tails of the distributions. However, accuracy of the asymptotic approximation may only emerge
after an extremely long time. For instance, if $M_{eq}$ is proportional to $t \log t$ to leading order, as occurs in numerous applications
\cite{Balint_Dolgopyat,dettmann2014diffusion,Zarfaty:2018tz}, there is typically a correction term of order $t$. For this
term to become negligible, $1/\log t$ must become small, so $t$ must be extremely large. We show 
that the presence or absence of a linear correction term depends on the
entire distribution $F$, not just on its tail. Similarly we show that the presence of a
logarithmic factor in $1-F$ may cause $M_{eq}$
 to consist, to leading order, of a sum of
two terms that only differ by a factor proportional to $\log t$.
Highly precise and complete knowledge of $F$ may therefore be needed to 
ascertain whether leading-order long-time asymptotic approximations for $M_{eq}$ are accurate approximations
to the actual $M_{eq}$ at times of physical interest.

Of course, distributions found in real-world applications need not be exactly power laws asymptotically. We show that the model is robust in the sense that for distributions bounded by power laws, the MSDs also satisfy related bounds.

We relate the transitional MSD, $M_{tr}$, to the equilibrium MSD, $M_{eq}$.
Our results on $M_{tr}$ are weaker than those on $M_{eq}$ and mostly concern the leading-order asymptotic
 behavior, but we also give a result on correction terms for $M_{tr}$. Finally, we present the asymptotic MSD for free molecular flow in planar channels \cite{BGT}.

\section{Background and notation}
\label{sec:background}

Let  $T_i$, $i=0,1,2,\ldots$, be random variables
with 
$$
0 \leq T_0 < T_1 < T_2 < \ldots~ .
$$
We will consider a random walker changing velocities at times $T_i$. 
We assume that the increments $\tau_i=T_i-T_{i-1}$, $i=1,2,3,\ldots$, are independent and
identically distributed positive random variables. The equilibrium and transitional cases differ in the choice of distribution of $T_0$, as discussed below.

Let $F$ denote the distribution of the $\tau_i$. We assume throughout this paper that $F$ is continuous and that
its mean, $\taubar$, is finite:
$$
\taubar = \int_0^\infty (1-F(s))~\! ds < \infty.
$$
Our primary concern is this paper is with the case when the
 second moment of $F$,
\begin{equation}
\label{eq:sigma_in_terms_of_F}
\sigma^2 = \int_0^\infty 2s(1-F(s)) ~\!ds,
\end{equation}
is infinite.

At each time $T_i$ a new velocity vector, $V_i$, is chosen. For simplicity of exposition, we will often refer to the times $T_0,T_1,\dots$ as "collision" times, thinking of a particle colliding with a background. Thus  the random selection of a new $V_i$ is thought of as the 
result of a ``collision".  We assume that the velocities are identically distributed and independent of each other and of the collision times. We'll also assume that the mean velocity is zero, since if it were not, one could simply subtract the mean velocity and consider the shifted velocities (the only change would be a drift in the direction of the mean velocity). Denote by $v$ the characteristic speed,
$$
v=\sqrt{E\left[|V_i|^2\right]},
$$
where $| \cdot |$ denotes the Euclidean norm.
As we shall see below, given a duration distribution $F$, the MSD is proportional to $v^2=E[|V_i|^2]$.
This is the only dependence of the MSD on the velocity distribution. So, for example, given $F$, walks in $d$ dimensions with post-collision velocities uniformly distributed on the sphere of radius $v$ and those which go only in axial directions, with equal probabilities, at speed $v$, have identical MSDs --- not just asymptotically, but for all time.

We consider two choices of $T_0$:

\begin{enumerate}
\item 
{\em Transitional case:} $T_0=0$. 
\item  {\em Equilibrium case:} $T_0>0$ is random and independent of the $\tau_i$, with 
distribution function
\begin{equation}
\label{eq:def_F_equi}
F_0(t) = \frac{1}{\taubar} \int_0^t (1-F(s)) ds,  ~~~ t>0.
\end{equation}
\end{enumerate}
To review the significance of the distribution $F_0$, recall \cite{Feller_II}
that for any $t \geq 0$, 
$$
Z_t = \min\{T_j ~:~ j \geq 0,~~T_j>t\} - t > 0
$$
is called the {\em residual life at time $t$}. Denote
its distribution function by $F_R(t,z)$, $z > 0$, so that
for $t \geq 0$, $z>0$, 
\begin{equation*}
\label{eqF:alternative}
F_R(t,z) = 
P(\exists  j  \geq 0  ~~~t < T_j \leq t+z).
\end{equation*}
\begin{lemma} 
\label{lemma:F_R_continuous}
If $F$ is continuous, then 
$F_R(t,z)$ is a continuous function of $(t,z) \in [0,\infty) \times (0,\infty)$.
\end{lemma}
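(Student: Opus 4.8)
The plan is to reduce continuity of $F_R$ to two properties of the renewal measure
\[
U(B)=\sum_{j\ge 0}P(T_j\in B)=E\big[\#\{j\ge 0:\ T_j\in B\}\big],\qquad B\subseteq\R\ \text{Borel},
\]
restricted to $(0,\infty)$: that it is locally finite and that it is atomless. Writing $A(t,z)$ for the event $\{\exists\,j\ge 0:\ t<T_j\le t+z\}$, so that $F_R(t,z)=P(A(t,z))$, I would first record these facts. Local finiteness, $U((0,A])=E[N(A)]<\infty$ for every $A$, is the statement that the renewal function is finite, which holds because the $\tau_i$ are i.i.d.\ and strictly positive with finite mean $\taubar$ (see Feller \cite{Feller_II}). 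Atomlessness on $(0,\infty)$ follows from the hypothesis that $F$ is continuous: each $T_j$ with $j\ge 1$ is a convolution involving at least one atomless factor $\tau_i$ and hence has no atoms, while in the equilibrium case $T_0$ has the absolutely continuous law $F_0$; in the transitional case the only atom of any $T_j$ is that of $T_0=0$ at the point $0$, which lies outside $(0,\infty)$.

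I would then establish joint continuity by a symmetric-difference estimate. For two parameter pairs the key observation is the inclusion
\[
A(t,z)\,\triangle\,A(t_0,z_0)\ \subseteq\ \big\{\exists\,j:\ T_j\in S\big\},
\qquad
S=(t,t+z]\,\triangle\,(t_0,t_0+z_0],
\]
which holds because if some $T_j$ lies in one of the two half-open intervals while no $T_j$ lies in the other, then that $T_j$ must lie in their symmetric difference. A union bound then gives the clean estimate
\[
\big|F_R(t,z)-F_R(t_0,z_0)\big|\ \le\ P\big(\exists\,j:\ T_j\in S\big)\ \le\ U(S).
\]
Since $t,t_0\ge 0$ the intervals are contained in $(0,\infty)$, so only the atomless part of $U$ is seen. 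As $(t,z)\to(t_0,z_0)$ within $[0,\infty)\times(0,\infty)$, the set $S$ is contained in the union of two half-open intervals whose lengths tend to $0$; because $U$ is finite on bounded sets and atomless on $(0,\infty)$, the map $\beta\mapsto U((0,\beta])$ is continuous there, whence $U(S)\to 0$ and joint continuity of $F_R$ at $(t_0,z_0)$ follows.

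The main obstacle I anticipate is bookkeeping rather than conceptual: $A(t,z)$ is a union of non-disjoint events (several $T_j$ can fall in $(t,t+z]$ simultaneously), so one cannot simply differentiate an explicit sum, and one must sidestep the deterministic atom of $T_0$ in the transitional case. Both difficulties dissolve upon passing to the renewal measure and invoking the union-bound inequality above, which trades the exact probability of $A(t,z)$ for the dominating quantity $U(S)$ that is easy to control. As a fallback I would keep in reserve the alternative decomposition into the genuinely disjoint events $F_R(t,z)=\sum_{k\ge 0}P(T_{k-1}\le t< T_k\le t+z)$: each term is continuous by dominated convergence, since the integrand's only discontinuity, at the moving endpoint $s=t$, is $F_{T_{k-1}}$-null by atomlessness, and uniform convergence on compact $t$-sets follows from the tail bound $\sum_{k\ge K}P(T_{k-1}\le t)\le P(T_{K-1}\le A)\to 0$; this route has the advantage of relying only on $T_{K-1}\to\infty$ a.s.\ rather than on finiteness of the full renewal function.
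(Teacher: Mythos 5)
Your proposal is correct, but it runs on different machinery than the paper's proof, and the comparison is instructive. The paper never introduces the renewal measure $U$. It first reduces joint continuity to the single statement that $\lim_{\Delta t \rightarrow 0^+} P\left( \exists j \geq 0 ~~ a < T_j \leq a + \Delta t \right) = 0$ for each fixed $a \geq 0$, using two one-sided inclusions (its inequalities (\ref{eq:obere}) and (\ref{eq:untere})) in place of your single symmetric-difference estimate; your version is tidier, since one inclusion $A(t,z)\,\triangle\,A(t_0,z_0) \subseteq \{\exists j:\ T_j \in S\}$ handles all sign combinations of the increments at once, whereas the paper writes out only the case $\Delta t, \Delta z \geq 0$ and declares the others analogous. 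The more substantive difference is in how the small-interval probability is killed: you apply the full union bound $P(\exists j:\ T_j \in S) \leq U(S)$ and then invoke local finiteness of the renewal function (a renewal-theoretic fact you rightly source to Feller) together with atomlessness, while the paper truncates the union bound at a finite index $n$, bounds the tail by $P(T_{n+1} \leq a + \Delta t)$, lets $\Delta t \rightarrow 0^+$ using continuity of each $T_j$'s law, and only then lets $n \rightarrow \infty$ using $P(T_{n+1} \leq a) \rightarrow 0$. The paper's truncation thus needs only that $T_n \rightarrow \infty$ in probability and stays entirely self-contained, at the cost of an $\epsilon$-$n$ juggle; your route imports a standard but nontrivial external fact ($E[N(A)] < \infty$) and in exchange yields a cleaner, quantitative modulus-of-continuity bound $\left| F_R(t,z) - F_R(t_0,z_0) \right| \leq U(S)$. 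You also handle the two delicate boundary issues correctly: the deterministic atom of $T_0$ at $0$ in the transitional case is invisible because $S \subseteq (0,\infty)$, and your fallback decomposition into the disjoint events $\{T_{k-1} \leq t < T_k \leq t+z\}$ is a third viable route, closer in spirit to the paper's truncation in that it too avoids finiteness of the full renewal function.
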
 
\begin{proof} Let $t \geq 0$, $z > 0$. Let $\Delta t$ and $\Delta z$ be numbers with
$t + \Delta t \geq 0$ and $z + \Delta z>0$. We must prove that 
$F_R(t+\Delta t, z+\Delta z) \rightarrow F_R(t,z)$ as $\Delta t, \Delta z \rightarrow 0$.
We will consider the case $\Delta t \geq 0$ and $\Delta z \geq 0$. The cases when one or both of $\Delta t$ and $\Delta z$
are negative can be understood analogously.

We have 
\begin{eqnarray*}
&~& \exists j \geq 0  ~~~ t+\Delta t<T_j\leq t+\Delta t + z + \Delta z \\
&\Rightarrow& \exists j \geq 0  ~~~t< T_j \leq t+\Delta t + z + \Delta z \\
&\Rightarrow&  \exists j \geq 0  ~~~ t<T_j \leq t+z  ~~\mbox{or}~~  t+z < T_j \leq t+z+\Delta t + \Delta z 
\end{eqnarray*}
and therefore
\begin{equation}
\label{eq:obere}
F_R(t+\Delta t, z+\Delta z)  \leq F_R(t,z) + F_R(t+z, \Delta t+ \Delta z).
\end{equation}
Similarly, 
\begin{eqnarray*}
&~& \exists j \geq 0  ~~~ t<T_j \leq t+z \\
&\Rightarrow& \exists j \geq 0  ~~~ t<T_j \leq t+\Delta t + z + \Delta z \\
&\Rightarrow&\exists j \geq 0  ~~~ t<T_j \leq t+\Delta t   ~\mbox{or}~ t+\Delta t<T_j \leq t+\Delta t + z + \Delta z 
\end{eqnarray*}
and therefore
\begin{equation}
\label{eq:untere}
F_R(t,z) \leq F_R(t,\Delta t) +  F_R(t+\Delta t, z+\Delta z). 
\end{equation}
From (\ref{eq:obere}) and (\ref{eq:untere}), 
$$
\left| F_R(t+\Delta t, z+\Delta z) - F_R(t,z) \right| \leq \max \left( F_R(t+z,\Delta t + \Delta z), F_R(t,\Delta t) \right).
$$
It therefore suffices to prove now that 
for any $a \geq 0$,
\begin{equation}
\label{eq:this_now_suffices}
\lim_{\Delta t \rightarrow 0^+} ~P \left( \exists j \geq 0~~~ a<T_j \leq a+\Delta t \right) = 0.
\end{equation}
To prove (\ref{eq:this_now_suffices}), we write, for any $n \geq 1$,
\begin{equation}
\label{eq:noch_eine_schranke}
P \left( \exists j \geq 0 ~~~ a<T_j \leq a+\Delta t \right)  \leq  \sum_{j=1}^n P \left(a< T_j \leq a+\Delta t  \right) +
P \left( T_{n+1} \leq a+\Delta t \right).
\end{equation}
Because $F$ is continuous, the right-hand side of (\ref{eq:noch_eine_schranke}) converges to 
$P(T_{n+1} \leq a)$ as $\Delta t \rightarrow 0^+$. This implies 
$$
\limsup_{\Delta t \rightarrow 0^+} ~P \left( \exists j \geq 0 ~~~a<T_j\leq a+\Delta t \right) \leq P
 \left( T_{n+1} \leq a \right).
$$
The assertion now follows because  $P(T_{n+1} \leq a) \rightarrow 0$ as $n \rightarrow \infty$.
\end{proof}

The importance of $F_0$ lies in the following well-known result from renewal theory \cite[Chapter XI, Section 4]{Feller_II}.
\begin{theorem}
In the equilibrium case, 
$$
F_R(t,z) = F_0(z) ~~ \mbox{for all $t \geq 0$, $z>0$.}
$$
In the transitional case,
$$
\lim_{t \rightarrow \infty} F_R(t,z)  = F_0(z) ~~~ \mbox{for all $z>0$.}
\label{thm:equilibrium}
$$
\end{theorem}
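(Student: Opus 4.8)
The plan is to reduce both statements to one master identity and then treat the exact (equilibrium) and asymptotic (transitional) parts separately. First I would rewrite the residual-life distribution as the probability of an empty interval: since $Z_t \le z$ exactly when some renewal falls in $(t,t+z]$, we have $1 - F_R(t,z) = P(\text{no } T_j \in (t,t+z])$. I would then partition this event by the last renewal at or before $t$. With $U(dy) = \sum_{j\ge 0} P(T_j \in dy)$ denoting the renewal measure of $\{T_j\}$, a disjointness argument (each outcome having at least one renewal $\le t$ possesses a unique last such renewal $T_j$, and ``no renewal in $(t,t+z]$'' is then equivalent to $T_{j+1} > t+z$, while independence of $\tau_{j+1}$ from $T_j$ factorizes the probability) gives
\begin{equation*}
1 - F_R(t,z) = P(T_0 > t+z) + \int_{[0,t]} \bigl(1 - F(t+z-y)\bigr)\, U(dy),
\end{equation*}
where the Tonelli interchange of summation and integration is justified by nonnegativity. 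In the transitional case $P(T_0>t+z)=0$ and $U$ carries the atom of $R_0=0$ at the origin; in the equilibrium case $T_0 \sim F_0$ and $U$ has no atom.

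For the equilibrium statement the crux is that the equilibrium renewal measure has constant density, $U(dy) = \taubar^{-1}\,dy$ on $(0,\infty)$. I would establish this via Laplace--Stieltjes transforms. Writing $\hat f(\lambda) = \int_0^\infty e^{-\lambda s}\,dF(s)$, the survival-function transform $\int_0^\infty e^{-\lambda t}(1-F(t))\,dt = (1-\hat f(\lambda))/\lambda$ shows the density $f_0 = (1-F)/\taubar$ of $F_0$ has transform $(1-\hat f(\lambda))/(\taubar\lambda)$. Since $T_j = T_0 + R_j$ with $R_j$ an ordinary renewal sequence, $U = F_0 * \sum_{n\ge 0} F^{*n}$, whose transform is $\frac{1-\hat f(\lambda)}{\taubar\lambda}\cdot\frac{1}{1-\hat f(\lambda)} = \frac{1}{\taubar\lambda}$, the transform of $\taubar^{-1}\,dy$; inversion proves the claim. (Alternatively, one checks directly that $\taubar^{-1}\,dy$ solves the renewal equation $m = F_0 + F*m$ and invokes uniqueness, keeping the argument elementary.) Substituting into the master identity and changing variables $w=t+z-y$ collapses the integral to $\taubar^{-1}\int_z^{t+z}(1-F(w))\,dw$, which combines with $P(T_0>t+z)=\taubar^{-1}\int_{t+z}^\infty(1-F(w))\,dw$ to give $\taubar^{-1}\int_z^\infty(1-F(w))\,dw = 1-F_0(z)$, independent of $t$; the case $t=0$ is immediate because $Z_0 = T_0$.

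For the transitional statement the master identity (with $U$ the ordinary renewal measure) is the renewal convolution $1 - F_R(t,z) = \int_{[0,t]} g(t-y)\,U(dy)$ with $g(w) = 1-F(w+z)$. Here I would invoke the Key Renewal Theorem: $g$ is nonnegative, nonincreasing, and integrable, with $\int_0^\infty g(w)\,dw = \int_z^\infty(1-F(u))\,du = \taubar\,(1-F_0(z)) < \infty$, hence directly Riemann integrable, and continuity of $F$ makes the inter-renewal law non-lattice, so the theorem yields $\int_0^t g(t-y)\,U(dy) \to \taubar^{-1}\int_0^\infty g(w)\,dw = 1-F_0(z)$ as $t\to\infty$.

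I expect the main obstacle to lie in the transitional case: the Key Renewal (equivalently Blackwell) theorem is the single genuinely nonelementary ingredient, and the argument hinges on verifying its hypotheses — direct Riemann integrability of $g$ and the non-lattice property — both of which follow cleanly from $\taubar<\infty$ and continuity of $F$. In the equilibrium case the only delicate point is the constant-density identity, where one must justify the transform inversion (or the uniqueness step for the renewal equation); the remaining manipulations are routine.
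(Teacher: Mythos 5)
The paper itself does not prove this theorem: it is presented as a well-known fact from renewal theory, with a citation to \cite[Chapter XI, Section 4]{Feller_II} in place of an argument. Your proposal supplies an actual proof, and it is correct; it is essentially the standard renewal-theoretic argument underlying the cited reference. Your decomposition over the last renewal at or before $t$ yields the identity $1-F_R(t,z)=P(T_0>t+z)+\int_{[0,t]}\bigl(1-F(t+z-y)\bigr)\,U(dy)$, and the disjointness and Tonelli steps are sound; the only unstated point is that almost surely only finitely many $T_j$ fall in $[0,t]$, so that the ``last renewal'' is well defined --- this follows from the $\tau_i$ being i.i.d., positive, with positive mean. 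The equilibrium half then rests on the constant-density identity $U(dy)=\taubar^{-1}\,dy$; of your two routes to it, the renewal-equation verification is the cleaner one, since one checks directly from the definition of $F_0$ that $F_0(y)+\taubar^{-1}\int_0^y F(u)\,du=y/\taubar$, and it avoids transform inversion entirely (if you do keep the transform route, note that summing the geometric series requires the easy remark that $\hat f(\lambda)<1$ for $\lambda>0$). The transitional half correctly isolates the one genuinely deep ingredient, the Key Renewal Theorem, and your verification of its hypotheses is right: $g(w)=1-F(w+z)$ is nonnegative, nonincreasing, and integrable, hence directly Riemann integrable, and continuity of $F$ rules out lattice support. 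As for what each treatment buys: the paper's citation keeps the exposition short, since the result is classical and not part of its contribution, whereas your version makes the logical structure explicit --- the equilibrium statement is exact and essentially elementary, while the transitional limit unavoidably leans on Blackwell's theorem in its Key Renewal form.
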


\pagebreak

We will use the following elementary fact. 

\begin{lemma}
\label{lemma:finite_v_F} 
If $\sigma^2 < \infty$, then $\d{ 1-F(t) = o \left( 1/t^2 \right)}$.
\end{lemma}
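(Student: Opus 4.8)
The plan is to exploit the monotonicity of $1-F$ together with the assumed finiteness of $\sigma^2=\int_0^\infty 2s(1-F(s))\,ds$. Since $F$ is a distribution function, $1-F$ is nonnegative and non-increasing, so for any $t>0$ and any $s\in[t/2,t]$ we have $1-F(s)\geq 1-F(t)$. The idea is to recover the factor $t^2$ by integrating the weight $2s$ against this lower bound over a window whose length scales with $t$, rather than trying to estimate $1-F(t)$ pointwise.

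Concretely, I would first fix $t>0$ and bound the weighted integral over $[t/2,t]$ from below:
\[
\int_{t/2}^t 2s\,(1-F(s))\,ds \;\geq\; 2\,(1-F(t))\int_{t/2}^t s\,ds \;=\; \frac{3t^2}{4}\,(1-F(t)).
\]
Next, since $\sigma^2<\infty$, the improper integral $\int_0^\infty 2s(1-F(s))\,ds$ converges, and hence its tail $\int_{t/2}^\infty 2s(1-F(s))\,ds$ tends to $0$ as $t\to\infty$. Combining these with the trivial bound $\int_{t/2}^t \leq \int_{t/2}^\infty$ gives
\[
0 \;\leq\; \frac{3t^2}{4}\,(1-F(t)) \;\leq\; \int_{t/2}^\infty 2s\,(1-F(s))\,ds \;\longrightarrow\; 0 \quad\text{as } t\to\infty,
\]
so $t^2(1-F(t))\to 0$, which is exactly $1-F(t)=o(1/t^2)$.

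I do not expect a genuine obstacle here: the whole argument rests on a single observation, namely choosing the window $[t/2,t]$ (any window of length comparable to $t$ works, e.g.\ $[t,2t]$) so that monotonicity converts the pointwise tail value $1-F(t)$ into an integral already carrying the correct $t^2$ factor. Everything else is the standard fact that the tail of a convergent nonnegative integral vanishes. The only point worth double-checking is that the constant in front of $t^2$ is strictly positive and independent of $t$ (it is, namely $3/4$), so that the squeeze indeed forces $t^2(1-F(t))\to 0$ rather than merely boundedness.
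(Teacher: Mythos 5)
Your proof is correct, but it follows a genuinely different route from the paper's. The paper works on the underlying probability space: writing $\tau$ for a random variable with distribution $F$ and $I_t$ for the indicator of $\{\tau > t\}$, it bounds $t^2(1-F(t)) \leq \int_\Omega \tau(\omega)^2 I_t(\omega)\, d\omega$ and then invokes the Lebesgue dominated convergence theorem (dominating function $\tau^2$, integrable since $\sigma^2 < \infty$) to conclude that the right-hand side tends to zero. Your argument instead never leaves the level of the tail-integral formula $\sigma^2 = \int_0^\infty 2s(1-F(s))\,ds$: monotonicity of $1-F$ converts the pointwise quantity $t^2(1-F(t))$ into a constant multiple of the window integral $\int_{t/2}^t 2s(1-F(s))\,ds$, which is squeezed by the tail of a convergent nonnegative integral. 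What your version buys: it is more elementary (no measure theory, no convergence theorem, no sample space), and it uses the paper's definition of $\sigma^2$ in eq.~(2.1) directly, whereas the paper's proof implicitly identifies that integral with the second moment $E[\tau^2]$ --- a Fubini/integration-by-parts step the paper leaves unstated. What the paper's version buys: the pointwise bound $t^2 I_t \leq \tau^2 I_t$ requires no choice of window and adapts immediately to other moments or weight functions. Both proofs are complete; your constant $3/4$ is right, and your squeeze does force convergence to zero rather than mere boundedness, as you note.
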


\begin{proof}
Let $\tau$ denote a random variable with distribution function $F$. 
Let $\Omega$ denote the underlying probability space. For $t>0$, let
$I_t(\omega) =1$ if $\tau(\omega ) >  t$, and $=0$ otherwise. Then
$$
t^2 (1-F(t)) = \int_{\tau(\omega) >t} t^2 ~\! d \omega \leq \int_{\tau(\omega) > t} \tau(\omega)^2 d \omega = 
\int_\Omega \tau(\omega)^2 I_t(\omega) d \omega.
$$
The integrand converges to zero as $t \rightarrow \infty$ for any fixed $\omega$, and is bounded by the integrable function 
$\tau(\omega)^2$. The Lebesgue dominated convergence theorem implies the assertion.
\end{proof}

Some simple properties of $F_0$ that will be of use to us later on are collected in the following lemma.

\begin{samepage}
\begin{lemma} 
\label{lemma:F_0_properties}
\begin{enumerate}
\item[(a)]
If $F$ is continuous, then $F_0$ is continuously differentiable, with
density
\begin{equation*}
\rho_0(t) =  \frac{1-F(t)}{\taubar}, ~~~ t>0.
\end{equation*}
\item[(b)]
 If $\sigma^2<\infty$ then $1-F_0(t)=o(1/t)$ as
$t \rightarrow \infty$.

\item[(c)] The expectation of $F_0$ is $ \sigma^2 /(2\taubar)$, regardless of whether $\sigma^2$ is finite
or infinite.

\end{enumerate}
\end{lemma}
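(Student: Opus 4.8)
The plan is to treat the three parts in turn, each reducing to the defining formula $F_0(t)=\frac{1}{\taubar}\int_0^t(1-F(s))\,ds$ together with the identity $\int_0^\infty(1-F(s))\,ds=\taubar$. The latter is exactly the finiteness-of-mean hypothesis, and it shows both that $F_0(\infty)=1$ (so $F_0$ is genuinely a distribution function) and that we have the tail representation
\begin{equation*}
1-F_0(t)=\frac{1}{\taubar}\int_t^\infty (1-F(s))\,ds,
\end{equation*}
which I will use repeatedly in parts (b) and (c).

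For part (a), since $F$ is continuous the integrand $1-F$ is continuous, so the Fundamental Theorem of Calculus applies directly: $F_0$ is differentiable with $F_0'(t)=(1-F(t))/\taubar=\rho_0(t)$, and continuity of $1-F$ makes $F_0'$ continuous, hence $F_0$ is continuously differentiable. This step is immediate.

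For part (b), I would start from the tail representation and estimate $t\bigl(1-F_0(t)\bigr)$. The key observation is that for $s\geq t$ one has $1-F(s)\leq \frac{s}{t}\,(1-F(s))$, whence
\begin{equation*}
t\bigl(1-F_0(t)\bigr)=\frac{t}{\taubar}\int_t^\infty(1-F(s))\,ds\leq \frac{1}{\taubar}\int_t^\infty s\,(1-F(s))\,ds.
\end{equation*}
Because $\sigma^2=\int_0^\infty 2s(1-F(s))\,ds<\infty$, the function $s\mapsto s(1-F(s))$ is integrable on $(0,\infty)$, so the tail $\int_t^\infty s(1-F(s))\,ds$ tends to $0$ as $t\to\infty$. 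Thus $t(1-F_0(t))\to 0$, i.e. $1-F_0(t)=o(1/t)$. The only mildly nontrivial ingredient here is the elementary bound that converts the tail of $\int(1-F)$ into the (controllable) tail of $\int s(1-F)$; this is the step I would flag as the main obstacle, modest as it is.

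For part (c), I would compute the expectation of $F_0$ as $\int_0^\infty (1-F_0(t))\,dt$ and insert the tail representation, then interchange the order of integration. Since all integrands are nonnegative, Tonelli's theorem justifies the swap and simultaneously handles the infinite case, giving
\begin{equation*}
\int_0^\infty(1-F_0(t))\,dt=\frac{1}{\taubar}\int_0^\infty\!\!\int_t^\infty(1-F(s))\,ds\,dt=\frac{1}{\taubar}\int_0^\infty s\,(1-F(s))\,ds=\frac{\sigma^2}{2\taubar},
\end{equation*}
using $\int_0^\infty s(1-F(s))\,ds=\sigma^2/2$ from \eqref{eq:sigma_in_terms_of_F}. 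The appeal to Tonelli is precisely what makes the identity valid whether $\sigma^2$ is finite or infinite, since both sides are then simultaneously $+\infty$.
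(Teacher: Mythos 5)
Your proposal is correct, and part (a) coincides with the paper's one-line appeal to the fundamental theorem of calculus; but for parts (b) and (c) you take a genuinely different route. For (b), the paper first proves a separate lemma (Lemma \ref{lemma:finite_v_F}, via dominated convergence) asserting $1-F(t)=o(1/t^2)$ when $\sigma^2<\infty$, and then integrates that estimate over the tail $\int_t^\infty(1-F(s))\,ds$. You instead bound $t\bigl(1-F_0(t)\bigr)$ directly by $\frac{1}{\taubar}\int_t^\infty s\,(1-F(s))\,ds$, which vanishes as the tail of an integrable function; this bypasses the auxiliary lemma and the dominated convergence theorem entirely. For (c), the paper integrates by parts, writing $\int_0^\infty(1-F_0(s))\,ds = \lim_{s\to\infty}s\bigl(1-F_0(s)\bigr)+\frac{\sigma^2}{2\taubar}$, and then invokes part (b) to kill the boundary term --- an argument that, as literally written, applies only when $\sigma^2<\infty$; the infinite-variance case requires the (easy but unstated) extra observation that both terms produced by the integration by parts are nonnegative and the integral term diverges. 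Your Tonelli argument treats the finite and infinite cases in one stroke and makes part (c) logically independent of part (b), so it is arguably tidier on exactly the point the lemma emphasizes (``regardless of whether $\sigma^2$ is finite or infinite''). What the paper's route buys in exchange is reuse: Lemma \ref{lemma:finite_v_F} is a standalone fact the authors state and prove anyway, so their proof of (b) comes nearly for free once that lemma is in place.
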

\end{samepage}
\begin{proof}
(a) follows from the fundamental theorem of calculus. 

(b) Since
$
 1-F_0(t) = \frac{1}{\taubar} 
\int_t^\infty (1-F(s)) ds, 
$
this follows from Lemma \ref{lemma:finite_v_F}.

\vskip 5pt
(c) The expectation of $F_0$ is 
$$
\int_0^\infty (1-F_0(s)) ds = \left[ s(1-F_0(s)) \right]_0^\infty + \int_0^\infty s \frac{1-F(s)}{\taubar} ds
=
\lim_{s \rightarrow \infty} s (1-F_0(s)) + \frac{\sigma^2}{2 \taubar}.
$$

Using part (b), we see that this equals
$\sigma^2/(2 \taubar)$.
\end{proof}

Now define the random, piecewise constant velocity, $V$,
$$
V(t) = \left\{ \begin{array}{cl} V_0   & \mbox{if $0  \leq t  < T_0$}, \\
V_i  & \mbox{if $T_{i-1} \leq t  < T_i$}. 
\end{array}
\right.
$$
Consider a random walker in $\R^d$ starting at $X(0)=0$ and moving with
velocity $V(t)$ at time $t$. The position of the random walker at time $t$ is
$$
X(t) = 
\int_0^t V(s) ds.
$$
The {\em mean square displacement (MSD)} is the quantity
$$
M(t) = E[| X(t)| ^2].
$$
As mentioned in the introduction, we use subscripts  to distinguish 
the equilibrium and transitional cases, writing
$M_{eq}$ and $M_{tr}$ and referring to  $M_{eq}$ as the  {\em equilibrium  MSD}, and to $M_{tr}$ as the {\em transitional  MSD.}

\pagebreak

\section{Integral representations and properties of the MSD}
\label{sec:computing_M}

\subsection{The MSD as a double integral}
The MSD can be expressed as an integral as follows \cite{dechant2014scaling}. For all $t\geq0$, 
\begin{eqnarray*}
M(t)  &=& E \left[ \int_0^t V(r) dr \cdot \int_0^t V(s) ds \right]  \\
&=& 
E \left[ \int_0^t \int_0^t V(r) \cdot V(s) ~\! dr ~\! ds \right]  \\ 
&=& 
2  E \left[ \int_0^t \int_0^s V(r) \cdot V(s) ~\! dr ~\! ds \right].
\end{eqnarray*}
Recalling that expectations are integrals over the sample space and using
Fubini's theorem, we conclude that
\begin{equation}
\label{eq:central_M_of_t_formula}
M(t) = 2  \int_0^t \int_0^s E \left[ V(r) \cdot V(s) \right]  dr ~\! ds. 
\end{equation}
 The auto-correlation $E[V(r) \cdot V(s)]$ that appears in (\ref{eq:central_M_of_t_formula}) 
is related to the residual life distribution $F_R$ as follows. Assume that $0\leq r<s$. If there is no $j$ with $T_j \in (r,s]$,
an event of probability $1-F_R(r,s-r)$,  then $V(r)=V(s)$ and so $E[V(r) \cdot V(s)]=v^2$. If there is a $j$ with 
$T_j \in (r,s]$, then $V(r)$ and $V(s)$ are independent and $E[V(r) \cdot V(s)]=0$.
Hence, 
\begin{equation}
\label{eq:auto_corr_from_F_R}
E[V(r)  \cdot V(s)] =  v^2 \left( 1-F_R(r,s-r) \right) 
\end{equation}
and, inserting (\ref{eq:auto_corr_from_F_R}) into (\ref{eq:central_M_of_t_formula}), 
\begin{equation}
M(t)=2 v^2 \int_0^t \int_0^s \left( 1-F_R(r,s-r) \right)  dr ~\! ds
\label{eq:generalM}
\end{equation}
for all $t\geq0$.

\subsection{Properties of the MSD}
 
We record the following general consequences of eq.\ (\ref{eq:generalM}).

\begin{proposition}
\label{prop:props}
In both the equilibrium and transitional cases, 
\begin{enumerate}
\item[(a)]
$
M(t) \sim v^2t^2
$ as $t \rightarrow 0$, 
\item[(b)] $M$ is continuously differentiable,
\item[(c)] $M^\prime(t)>0$ for all $t>0$, 
\item[(d)] $M$ grows at least linearly, i.e., 
$
{\d \liminf_{t \rightarrow \infty} M'(t)>0}, and
$
\item[(e)] for any fixed $s>0$, 
${\d 
\lim_{t \rightarrow \infty} \frac{M(t-s)}{M(t)} = 1.}
$
\end{enumerate}
\end{proposition}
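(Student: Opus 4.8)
The plan is to work throughout from the representation (\ref{eq:generalM}) and to abbreviate the inner integral as
$g(t) = \int_0^t \left(1 - F_R(r,t-r)\right)\,dr$, so that $M(t) = 2v^2 \int_0^t g(\sigma)\,d\sigma$ with $0 \le g(t) \le t$; I assume $v>0$, the case $v=0$ being trivial. I would prove (b) first, since it underlies the rest. The integrand $1-F_R(r,t-r)$ lies in $[0,1]$ and, by Lemma \ref{lemma:F_R_continuous}, is continuous in its arguments on the relevant region; writing $g(\sigma) = \int_0^\infty (1-F_R(r,\sigma-r))\mathbf{1}_{r<\sigma}\,dr$ and letting $\sigma$ vary, dominated convergence gives continuity of $g$, the only delicate point being the corner $r=\sigma$, where $1-F_R\to 1$ but the integrand stays bounded. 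The fundamental theorem of calculus then yields $M \in C^1$ with $M'(t) = 2v^2 g(t)$, after which (a), (c), (d) become assertions about $g$.

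For (a) I would use the uniform bound $F_R(r,\sigma-r) \le F_R(0,\sigma) =: \delta(\sigma)$, valid because the event in $F_R(r,\sigma-r)$ is contained in that of $F_R(0,\sigma)$; here $\delta$ is nondecreasing and $\delta(\sigma)\to 0$ as $\sigma\to 0^+$, since $F_R(0,\sigma)$ equals $F(\sigma)$ (transitional) or $F_0(\sigma)$ (equilibrium), both vanishing at $0$. This gives $\sigma(1-\delta(\sigma)) \le g(\sigma)\le \sigma$, hence $v^2t^2(1-\delta(t))\le M(t)\le v^2t^2$ and $M(t)\sim v^2t^2$. For (c), as $r\uparrow t$ the events defining $F_R(r,t-r)$ shrink to a null event (no $T_j$ can equal $t$ with positive probability, $F$ being continuous), so $1-F_R(r,t-r)\to 1$; by continuity the integrand exceeds $1/2$ on some interval $(t-\eta,t)$, forcing $g(t)>0$ and $M'(t)>0$. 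For (d), substituting $u=t-r$ gives $M'(t)=2v^2\int_0^t (1-F_R(t-u,u))\,du$; in the transitional case Theorem \ref{thm:equilibrium} gives $F_R(t-u,u)\to F_0(u)$ pointwise, so Fatou's lemma yields $\liminf_{t\to\infty} M'(t) \ge 2v^2\int_0^\infty (1-F_0(u))\,du = v^2\sigma^2/\taubar>0$ by Lemma \ref{lemma:F_0_properties}(c), while in the equilibrium case $M'(t)=2v^2\int_0^t(1-F_0(u))\,du$ is nondecreasing with the same positive limit.

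Part (e) is the main obstacle: the crude bounds $M'(t)\le 2v^2t$ and $M(t)\ge ct$ (from (d)) only show that $(M(t)-M(t-s))/M(t)$ stays bounded, not that it vanishes. The device I would use is the sublinearity estimate $g(t)\le g(\sigma)+(t-\sigma)$ for $0\le\sigma\le t$: splitting $g(t)=\int_0^\sigma+\int_\sigma^t$, the tail contributes at most $t-\sigma$, while on $[0,\sigma]$ monotonicity of $F_R$ in its second argument gives $1-F_R(r,t-r)\le 1-F_R(r,\sigma-r)$, so the head is at most $g(\sigma)$. Integrating the resulting lower bound $g(\sigma)\ge (g(t)-(t-\sigma))^+$ over $\sigma\in[0,t]$ (and using $g(t)\le t$) yields $M(t)=2v^2\int_0^t g\ge v^2 g(t)^2$, which, since $M'(t)=2v^2g(t)$, is exactly the differential inequality $M'(t)\le 2v\sqrt{M(t)}$.

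The conclusion of (e) then follows cleanly: this inequality says $(\sqrt{M})'(t)=M'(t)/(2\sqrt{M(t)})\le v$, so $\sqrt{M}$ is Lipschitz with constant $v$ and $\sqrt{M(t)}-\sqrt{M(t-s)}\le vs$. Since $M(t)\to\infty$ by (d), dividing by $\sqrt{M(t)}$ gives $\sqrt{M(t-s)}/\sqrt{M(t)}\to 1$, and squaring yields the claim. I expect the single non-mechanical step to be discovering the estimate $g(t)\le g(\sigma)+(t-\sigma)$, equivalently that $\sqrt{M}$ grows at most linearly; everything else is bookkeeping with (\ref{eq:generalM}), dominated and Fatou convergence, and the renewal result above.
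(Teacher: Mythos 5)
Your proposal is correct, and the interesting divergence from the paper is in part (e). Parts (a)--(c) follow essentially the paper's route: continuity of $F_R$ (Lemma \ref{lemma:F_R_continuous}) plus the fundamental theorem of calculus gives $M\in C^1$ with $M'(t)=2v^2\int_0^t(1-F_R(r,t-r))\,dr$, and positivity comes from $1-F_R(r,t-r)\to1$ as $r\to t^-$; in fact your bound $F_R(r,\sigma-r)\le F_R(0,\sigma)$ in (a) is slightly more careful than the paper's, which bounds the maximum by $F(t)$ --- an inequality that can fail in the equilibrium case (where the correct bound is $F_0(t)$), though the paper's conclusion is unaffected. For (d), the paper treats both cases at once by keeping only $r\in[t-a,t]$ and using $F_R(r,t-r)\le F_R(t-a,a)\to F_0(a)$ via Theorem \ref{thm:equilibrium}; your split argument (monotonicity in equilibrium, Fatou plus Theorem \ref{thm:equilibrium} in the transitional case) is equally valid and yields the sharper conclusion $\liminf_{t\to\infty}M'(t)\ge v^2\sigma^2/\taubar$, which is infinite when $\sigma^2=\infty$. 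The genuine difference is (e): the paper argues probabilistically, applying Cauchy--Schwarz to $E[(|X(t)|-|X(t-s)|)(|X(t)|+|X(t-s)|)]$ and bounding $E[|X(t)-X(t-s)|^2]\le v^2s^2$ to obtain $M(t)-M(t-s)\le 2vs\sqrt{M(t)}$, whereas you stay entirely inside the representation (\ref{eq:generalM}), using monotonicity of $F_R$ in its second argument to get the sublinearity $g(t)\le g(\sigma)+(t-\sigma)$, integrating it to $M(t)\ge v^2g(t)^2$, i.e.\ $M'\le 2v\sqrt{M}$, i.e.\ $\sqrt{M}$ is $v$-Lipschitz. The two inequalities are equivalent in strength (yours implies the paper's by factoring $M(t)-M(t-s)$; the paper's recovers yours on dividing by $s$ and letting $s\to0$), and both express that RMS displacement grows at most at speed $v$. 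The paper's route buys brevity and physical transparency; yours buys independence from the underlying random variables --- it applies verbatim to any function defined by (\ref{eq:generalM}) with a kernel monotone in its second argument --- at the cost of the sublinearity estimate, which is, as you note, the one non-mechanical step.
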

\begin{proof}
(a) means
$$
\frac{M(t)}{t^2} = \frac{2v^2}{t^2} \int_0^t \int_0^s (1-F_R(r,s-r)) ~\! dr ~\! ds \rightarrow v^2, 
$$
and this holds because 
$$
\max \left\{ F_R(r,s-r):0 \leq r < s \leq t \right\} \leq F(t)  \rightarrow 0 
$$
as $t \rightarrow 0$.
(b) follows immediately from  (\ref{eq:generalM})
because $F_R$ is a continuous function (Lemma \ref{lemma:F_R_continuous}). In fact, 
$$
M'(t) = 2 v^2 \int_0^t (1-F_R(r,t-r)) ~\! dr.
$$
Note that  $1-F_R(r,t-r)$ is a non-negative continuous function of $r$ and $t$ which tends to 1 as $r \rightarrow  t^-$. Thus,
$M'(t)>0$ for all $t>0$, establishing (c). To prove (d), 
we will prove the
existence of a positive lower bound on $M'$, valid for sufficiently large $t$.
Let $a>0$ and $t \geq a$. Then 
\begin{eqnarray*}
M'(t) &=& 2v^2 \int_0^t (1- F_R(r,t-r)) ~dr \\  &\geq& 2v^2 \int_{t-a}^t \left( 1 - F_R(r,t-r) \right)  dr  \\
\\ &\geq &
2v^2 \int_{t-a}^t  \left( 1- F_R(t-a,a) \right) dr  
 \\
&=& 
2v^2a \left(1 - F_R(t-a,a) \right) \\
&\rightarrow& 
 2v^2a (1-F_0(a))
\end{eqnarray*}
as $t \rightarrow \infty$. 
For sufficiently small $a>0$, we have $2v^2a (1-F_0(a))>0$. Choose $a$ so that this holds. Then for
sufficiently large $t$, 
$$
M'(t) \geq  v^2 a (1-F_0(a)) > 0.
$$
This implies (d). To prove (e), first note
$M(t) - M(t-s)>0$ because of (c).  Furthermore, 
\begin{eqnarray*}
(M(t) - M(t-s))^2 &=& \left(E[|X(t)|^2 - |X(t-s)|^2]\right)^2  \\
&=& \left(E[(|X(t)|-|X(t-s)|) (|X(t)|+|X(t-s)|) ] \right)^2  \\
&\leq&  {E[\left|X(t)-X(t-s)\right|^2]} ~ {E [\left|X(t)+X(t-s)\right|^2 ]}~~\\
&\leq& E\left[\left(\int_{t-s}^t|V(u)|du\right)^2\right] {2 \left(E[\left|X(t)\right|^2] +  E[\left|X(t-s)\right|^2]\right)} \\
&\leq& 2E\left[s\int_{t-s}^t|V(u)|^2du\right] {\left(E[\left|X(t)\right|^2] +  E[\left|X(t)\right|^2]\right)} \\
&=& 2s\int_{t-s}^tE\left[|V(u)|^2\right]du {\left(E[\left|X(t)\right|^2] +  E[\left|X(t)\right|^2]\right)} \\
&=& 4v^2s^2{M(t)}.
\end{eqnarray*}
Hence, $M(t)-M(t-s)\leq 2vs\sqrt{M(t)}$, and, since $M(t) \rightarrow \infty$ by (d), (e) follows.
\end{proof}

\subsection{Computing $\bm{M_{eq}}$ from $\bm{F}$}
\label{subsec:equi}
As we have seen in Theorem \ref{thm:equilibrium}, in the equilibrium case, $F_R(r,s-r)=F_0(s-r)$ for all $r$. Hence,  eq.\ (\ref{eq:generalM}) becomes
\begin{eqnarray}
\label{eq:M_{eq}_double}
M_{eq}(t) 
& = & 2v^2 \int_0^t \int_0^s (1-F_0(s-r)) ~\! dr ds .
\end{eqnarray}
We now establish two consequences of this integral expression which will allow us both to construct illustrative examples and to establish the asymptotic behavior of the MSD.
First, differentiating (\ref{eq:M_{eq}_double}) and introducing the change of variables $t-r=u$, we find 
\begin{equation}
\label{eq:M_{eq}_prime}
M_{eq}'(t) = 2v^2 \int_0^t (1-F_0(u)) du.
\end{equation}
Equation (\ref{eq:M_{eq}_prime}) implies, by anti-differentiation, a representation of $M_{eq}$ as a single integral:
\begin{equation}
\label{eq:M_{eq}_single}
M_{eq}(t) = 2 v^2 \int_0^t (1-F_0(u)) (t-u) du.
\end{equation}
Integrating by parts in (\ref{eq:M_{eq}_single}), using (\ref{eq:def_F_equi}), we 
obtain an integral expression for $M_{eq}$ directly in terms of $F$ rather than $F_0$:
\begin{proposition} For all $t\geq0$, the equilibrium MSD can be expressed as 
\label{prop:M_eq_from_F}
\begin{eqnarray}
M_{eq}(t) &=& v^2 \left( t^2 - \frac{1}{\taubar} \int_0^t (1-F(u)) (t-u)^2 du \right) 
\nonumber
\label{eq:M_eq_from_F1}
 \\
&=& \frac{v^2}{\taubar} \left( t^2 \int_t^\infty (1-F(u)) du  + \int_0^t (1-F(u)) (2tu-u^2) du \right).
\label{eq:M_eq_from_F2}
\end{eqnarray}
\end{proposition}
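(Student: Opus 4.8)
The plan is to start from the single-integral representation (\ref{eq:M_{eq}_single}) and integrate by parts so as to transfer the derivative from the decorrelation factor $1-F_0$ onto the weight $(t-u)$; Lemma \ref{lemma:F_0_properties}(a) then converts the resulting density $\rho_0$ into $(1-F)/\taubar$, producing an expression in $F$ alone. Concretely, in $M_{eq}(t)=2v^2\int_0^t (1-F_0(u))(t-u)\,du$ I would differentiate the factor $1-F_0(u)$ and antidifferentiate $(t-u)$. The key choice is to use the antiderivative $-(t-u)^2/2$, which vanishes at the upper limit $u=t$: this annihilates the boundary contribution there and leaves only the lower endpoint $u=0$, where $F_0(0)=0$ gives a clean term $v^2t^2$.

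After integration by parts, the boundary term supplies $v^2t^2$, while the remaining integral, using $dw=-\rho_0(u)\,du=-(1-F(u))/\taubar\,du$ from Lemma \ref{lemma:F_0_properties}(a), becomes $-(v^2/\taubar)\int_0^t (1-F(u))(t-u)^2\,du$. Together these give exactly the first displayed form (\ref{eq:M_eq_from_F1}).

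To reach the second form I would use the definition of the mean, $\taubar=\int_0^\infty (1-F(u))\,du$, to rewrite $v^2t^2=(v^2t^2/\taubar)\int_0^\infty (1-F(u))\,du$ and split the range of integration at $u=t$. The piece over $[0,t]$ combines with the existing integral through the algebraic identity $t^2-(t-u)^2=2tu-u^2$, collapsing the two integrals over $[0,t]$ into the single term $(v^2/\taubar)\int_0^t (1-F(u))(2tu-u^2)\,du$, while the piece over $[t,\infty)$ furnishes $(v^2/\taubar)\,t^2\int_t^\infty (1-F(u))\,du$. This produces (\ref{eq:M_eq_from_F2}).

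There is no substantial analytic obstacle here: everything is elementary, and all integrals converge since $\taubar<\infty$ forces $\int_t^\infty(1-F)\,du<\infty$. The only points demanding care are bookkeeping ones, namely selecting the antiderivative of $(t-u)$ so that the boundary term at $u=t$ vanishes (otherwise one must track an extra contribution), correctly invoking $F_0(0)=0$ at the lower limit, and keeping the split of $\taubar$ consistent when passing from the first form to the second. I expect this last splitting-and-recombining step to be the most error-prone, since it hinges on recognizing $t^2-(t-u)^2=2tu-u^2$ and matching it against the claimed integrand.
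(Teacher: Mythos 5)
Your proposal is correct and follows the same route as the paper: the paper's proof is precisely an integration by parts in the single-integral representation (\ref{eq:M_{eq}_single}), using the definition (\ref{eq:def_F_equi}) of $F_0$, followed by splitting $\taubar=\int_0^\infty(1-F(u))\,du$ at $u=t$ and the identity $t^2-(t-u)^2=2tu-u^2$ to pass to the second form. Your write-up simply makes explicit the bookkeeping (choice of antiderivative $-(t-u)^2/2$, the vanishing boundary term at $u=t$, and $F_0(0)=0$) that the paper leaves to the reader.
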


Proposition \ref{prop:M_eq_from_F} will prove useful in deriving asymptotic results. We now derive 
an alternative characterization of $M_{eq}$, which will prove useful in constructing examples. 
Differentiating eq.\ (\ref{eq:M_{eq}_prime}), we obtain
\begin{equation}
\label{eq:Meqpp}
M_{eq}''(t) = 2v^2 (1-F_0(t)).
\end{equation}
Differentiating a third time, we arrive at the following proposition.

\begin{proposition}
\label{propode}
$M_{eq}$ is three times continuously differentiable and for all $t\geq0$,
\begin{equation}
\label{eq:MSD_3}
M_{eq}'''(t) = \frac{2v^2}{\taubar} (F(t)-1)
\end{equation}
with boundary conditions,
\begin{equation}
\label{eq:bc_for_M_{eq}}
 M_{eq}(0) =0, ~~~ M_{eq}'(0) =0, ~~~ M_{eq}''(0) = 2 v^2, ~~~  M_{eq}''(\infty) = 0.
\end{equation}
\end{proposition}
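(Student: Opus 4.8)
The plan is to obtain the differential equation by differentiating the already-established second-derivative formula one more time, and then to read off each of the four boundary conditions directly from the chain of identities (\ref{eq:M_{eq}_single}), (\ref{eq:M_{eq}_prime}), and (\ref{eq:Meqpp}). No new estimate is required; the entire argument is an assembly of earlier formulas together with Lemma \ref{lemma:F_0_properties}.

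First I would settle the regularity and the ODE simultaneously. Equation (\ref{eq:Meqpp}) gives $M_{eq}''(t) = 2v^2(1-F_0(t))$. By Lemma \ref{lemma:F_0_properties}(a), continuity of $F$ implies that $F_0$ is continuously differentiable with $F_0'(t) = (1-F(t))/\taubar$. Hence $M_{eq}''$ is itself continuously differentiable, so $M_{eq} \in C^3$, and differentiating gives
$$
M_{eq}'''(t) = -2v^2 F_0'(t) = -\frac{2v^2}{\taubar}\left(1-F(t)\right) = \frac{2v^2}{\taubar}\left(F(t)-1\right),
$$
which is (\ref{eq:MSD_3}); continuity of the right-hand side (again because $F$ is continuous) confirms that $M_{eq}'''$ is continuous.

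Next I would verify the boundary conditions (\ref{eq:bc_for_M_{eq}}). Setting $t=0$ in the single-integral representation (\ref{eq:M_{eq}_single}) gives $M_{eq}(0)=0$, and setting $t=0$ in (\ref{eq:M_{eq}_prime}) gives $M_{eq}'(0)=0$. For the second derivative I would evaluate (\ref{eq:Meqpp}) at the two endpoints: from the definition (\ref{eq:def_F_equi}) one has $F_0(0)=0$, so $M_{eq}''(0)=2v^2$; and $\lim_{t\to\infty}F_0(t)=1$, whence $M_{eq}''(\infty)=2v^2(1-1)=0$.

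There is essentially no obstacle here, since every claim collapses onto a formula proved earlier. The one point worth flagging is that $M_{eq}''(\infty)=0$ must remain valid in the primary case $\sigma^2=\infty$. This is fine: the limit $F_0(\infty)=1$ follows from $\tfrac{1}{\taubar}\int_0^\infty(1-F(s))\,ds = \taubar/\taubar = 1$, which uses only finiteness of $\taubar$ built into (\ref{eq:def_F_equi}), so the convergence $F_0(t)\to1$ is unconditional and does not rely on the $o(1/t)$ decay of Lemma \ref{lemma:F_0_properties}(b).
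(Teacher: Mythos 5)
Your proof is correct and follows essentially the same route as the paper, which likewise obtains (\ref{eq:MSD_3}) by differentiating (\ref{eq:M_{eq}_prime}) and (\ref{eq:Meqpp}) via Lemma \ref{lemma:F_0_properties}(a), with the boundary conditions read off from the integral representations and $F_0(0)=0$, $F_0(\infty)=1$. Your explicit remark that $M_{eq}''(\infty)=0$ needs only finiteness of $\taubar$, not finite variance, is a welcome clarification that the paper leaves implicit.
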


\noindent
It may appear at first sight that $M_{eq}$ is over-determined by the four 
conditions in (\ref{eq:bc_for_M_{eq}}). However, note that (\ref{eq:MSD_3}) implies
that 
$
M_{eq}''(0) - M_{eq}''(\infty) = 2 v^2.
$
Therefore, $M_{eq}''(0) = 2 v^2$ and $M_{eq}''(\infty) = 0$ are equivalent conditions.

\section{A family of power laws with closed-form $\bm{M_{eq}}$}
\label{sec:powerlaws}
A standard assumption in the literature is that $F$ is, asymptotically, a power law. This enables derivation of asymptotic behavior of L\'evy walks, including the MSD, via Laplace transforms and the Tauberian theorem \cite{klafter_sokolov}. 

However, in the equilibrium case, for a variety of power laws, one can get a more precise result --- the exact MSD for all times. For example,  
for $\alpha>1$ and $t_0>0$, let
\begin{equation}
\label{eq:F_origin}
F(t) = 1 - \left( 1 + \frac{t}{t_0} \right)^{-\alpha}.
\end{equation}
The mean of this distribution is 
\begin{equation}
\label{eq:first_moment}
\taubar = \int_0^\infty \left( 1 + \frac{t}{t_0} \right)^{-\alpha}dt= \frac{ t_0}{\alpha-1}. 
\end{equation}
For $\alpha\leq2$, the second moment of $F$ is infinite, while for $\alpha>2$, it is finite.
Using (\ref{eq:first_moment}), 
we see that in the limit as $\alpha \rightarrow \infty$, with $\taubar$ fixed, we obtain the exponential distribution
\begin{equation}
\label{eq:alpha_is_infinity}
F(t) = 1 -  e^{-t/\taubar}. 
\end{equation}
We also note that 
$$
F_{0}(t) = \frac{1}{\taubar} \int_0^t \left( 1 -F(s) \right) ds = 1 - \left( 1+\frac{t}{t_0} \right)^{1-\alpha}.
$$

Using either Proposition \ref{prop:M_eq_from_F} or Proposition \ref{propode}, we can  compute $M_{eq}(t)$ exactly for the distributions  in (\ref{eq:F_origin}) and (\ref{eq:alpha_is_infinity}). For example, consider the case
$\alpha=3$. Using eqs.\ (\ref{eq:MSD_3}) and (\ref{eq:first_moment}), we obtain
\[
M_{eq}^{'''}(t)=\frac{4v^2}{t_0}\left(1+\frac{t}{t_0}\right)^{-3}.
\]
Integrating, using the boundary conditions (\ref{eq:bc_for_M_{eq}}), we find
\[
M_{eq}^{''}(t)=-2v^2\left(1+\frac{t}{t_0}\right)^{-2},
\]
\[
M_{eq}^{'}(t)=2v^2t_0-2v^2t_0\left(1+\frac{t}{t_0}\right)^{-1},
\]
and
\[
M_{eq}(t)=2v^2t_0^2\left(\frac{t}{t_0}-\log\left(1+\frac{t}{t_0}\right)\right).
\]

\noindent
This proves eq.\ (\ref{alpha3}) below. The other parts of Proposition \ref{theorem:canonical} are proved by similar straightforward calculations.

\begin{proposition}
\label{theorem:canonical}
For the distribution given by (\ref{eq:F_origin}) if $\alpha<\infty$, and by (\ref{eq:alpha_is_infinity}) if
$\alpha = \infty$, we have, for all $t\geq0$,

\begin{numcases} {M_{eq} (t)=}
\!\!\frac{2 (v t_0)^2 }{(2-\alpha)(3-\alpha)}  \left(\left( 1+\frac{t}{t_0} \right)^{3-\alpha}- (3-\alpha)\frac{t}{t_0} -1   \right) & \kern-1em \kern-.2em if   $\alpha>1, \alpha\!\neq\!2,3$
\label{generic}
\\ 
\label{alpha2}
\!\! 2 (v t_0)^2 \left(  \left( 1 + \frac{t}{t_0} \right) \log \left( 1 + \frac{t}{t_0} \right) - \frac{t}{t_0} \right) & \kern-1em\kern-.2em \mbox{if }  $\alpha=2$
\\
\!\! 2 (v t_0)^2 \left( \frac{t}{t_0} - \log \left( 1 + \frac{t}{t_0} \right) \right) & \kern-1em\kern-.2em \mbox{if }  $\alpha=3$
\label{alpha3}
\\
\!\! 2 (v \taubar)^2 \left( \frac{t}{\taubar} - 1 + e^{-t/\taubar} \right) & \kern-1em\kern-.2em \mbox{if }  $\alpha=\infty$
\label{alphainf}
\end{numcases}
\end{proposition}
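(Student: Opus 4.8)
The plan is to integrate the third-order ordinary differential equation of Proposition~\ref{propode} three times, fixing the three constants of integration with the boundary conditions collected there. This is exactly the route already carried out for $\alpha=3$ in the text preceding the statement; the point is simply that it goes through uniformly for every $\alpha>1$, with two exceptional values of $\alpha$ requiring separate treatment for an elementary reason explained below. For the power law (\ref{eq:F_origin}) we have $1-F(t)=(1+t/t_0)^{-\alpha}$ and, by (\ref{eq:first_moment}), $\taubar=t_0/(\alpha-1)$, so (\ref{eq:MSD_3}) reads
\[
M_{eq}'''(t)=-\frac{2v^2(\alpha-1)}{t_0}\left(1+\frac{t}{t_0}\right)^{-\alpha}.
\]

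First I would integrate once. Since $\alpha\neq1$, antidifferentiating $(1+t/t_0)^{-\alpha}$ introduces the exponent $1-\alpha$ with no logarithm, and the algebra collapses neatly: the factor $-(\alpha-1)$ cancels against the $1/(1-\alpha)$ produced by integration, giving $M_{eq}''(t)=2v^2(1+t/t_0)^{1-\alpha}+C_1$. The condition $M_{eq}''(\infty)=0$ (legitimate because $\alpha>1$ forces the power to vanish at infinity) yields $C_1=0$, and one checks automatically that $M_{eq}''(0)=2v^2$, consistent with the over-determination remark following Proposition~\ref{propode}.

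The remaining two integrations are where the case distinction originates. Antidifferentiating $(1+t/t_0)^{1-\alpha}$ to obtain $M_{eq}'$ uses the power rule, which is valid precisely when the exponent $1-\alpha$ is not $-1$, i.e. when $\alpha\neq2$; at $\alpha=2$ the antiderivative is logarithmic, which is the source of (\ref{alpha2}). Imposing $M_{eq}'(0)=0$ fixes the next constant. Antidifferentiating the resulting power $(1+t/t_0)^{2-\alpha}$ to obtain $M_{eq}$ is again a power rule that fails exactly when $2-\alpha=-1$, i.e. $\alpha=3$, producing the logarithm in (\ref{alpha3}); imposing $M_{eq}(0)=0$ fixes the last constant and delivers the generic formula (\ref{generic}) for $\alpha\neq2,3$. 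The case $\alpha=\infty$ is handled identically starting from $1-F(t)=e^{-t/\taubar}$ in (\ref{eq:alpha_is_infinity}): every integration reproduces an exponential, no borderline exponent ever arises, and one obtains (\ref{alphainf}).

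I do not expect any genuine obstacle, since the content is three antidifferentiations and three boundary conditions. The only thing to watch is the bookkeeping of the two borderline exponents $\alpha=2$ and $\alpha=3$, where the power rule would divide by zero and a logarithm appears instead. These are exactly the removable singularities of the generic formula (\ref{generic}), so a useful consistency check is to verify that the $\alpha\to2$ and $\alpha\to3$ limits of (\ref{generic}), taken with L'H\^opital's rule in the variable $\alpha$, reproduce (\ref{alpha2}) and (\ref{alpha3}); the $\alpha$-derivative of $(1+t/t_0)^{3-\alpha}$ is precisely what manufactures the required logarithm. As a fully independent check one may instead evaluate the explicit integral of Proposition~\ref{prop:M_eq_from_F}, which must return the same four expressions.
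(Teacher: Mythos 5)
Your proposal is correct and follows essentially the same route as the paper: the authors also integrate the ODE $M_{eq}'''(t)=\frac{2v^2}{\taubar}(F(t)-1)$ of Proposition~\ref{propode} three times using the boundary conditions (\ref{eq:bc_for_M_{eq}}), carrying out the computation explicitly for $\alpha=3$ and noting that the remaining cases (including the borderline exponents and $\alpha=\infty$) follow by the same straightforward calculation.
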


We stress again that these are {\em exact} formulas valid for all times, not just {\em asymptotic} ones.
Note that (\ref{generic}) tends to (\ref{alpha2}) as $\alpha \rightarrow 2$, and to (\ref{alpha3}) as $\alpha \rightarrow 3$.
Using (\ref{eq:first_moment}), 
it is also straightforward to see that in the limit as $\alpha \rightarrow \infty$ with $\taubar$ fixed, (\ref{generic}) tends to 
(\ref{alphainf}).

When the particles move at a constant speed $v$, there is a sharp front consisting of particles which have experienced no collisions. Then at time $t$, the probability of having experienced no collisions is $1-F_0(t)$ and the contribution to $M_{eq}$ of this front is $v^2t^2(1-F_0(t))$. For the above power law with $1<\alpha<2$, this gives a contribution of 
$$
M_{eq}^f(t)= v^2 t^2  \left( 1 + \frac{t}{t_0} \right)^{1-\alpha}.
$$
As $t\rightarrow\infty$, the front contributes an asymptotically constant proportion to the total MSD:
$$
\frac{M_{eq}^f(t)}{M_{eq}(t)}\rightarrow\frac{(2-\alpha)(3-\alpha)}{2}.
$$
For $\alpha\geq2$, the  contribution of the front to $M_{eq}(t)$ becomes negligible as $t \rightarrow \infty$.

\section{Leading-order asymptotics of $\bm{M_{eq}}$}
If $F$ has finite variance, then
$$
\lim_{t \rightarrow \infty} M_{eq}'(t)=2v^2\lim_{t \rightarrow \infty}\int_0^t (1-F_0(s)) ds=\frac{\sigma^2 v^2}{\taubar},
$$
by eq.\ (\ref{eq:M_{eq}_prime}) together with 
Lemma \ref{lemma:F_0_properties}, part (c).
Hence,
\begin{equation}
\label{eq:finite_variance}
M_{eq}(t) \sim \sigma^2 v^2 \frac{t}{\taubar} ~~~ \mbox{as $t \rightarrow \infty$}.
\end{equation}

The more interesting case is that of infinite variance, to which we now turn. In several
places we will make use of the following lemma. 

\begin{lemma}
\label{lemma:integrate_asymptotics}
Let $f$ and $g$ be non-negative, locally integrable functions of $t \in [0,\infty)$ with 
$f(t) > 0 $ and $g(t)>0$ for sufficiently large $t$,
$$
f(t) \sim g(t) ~~~ \mbox{as $t \rightarrow \infty$}, 
$$
and 
$$
\int_0^\infty f(t) dt = \int_0^\infty g(t) dt = \infty.
$$
Then 
$$
\int_0^t f(s) ds \sim \int_0^t g(s) ds ~~~ \mbox{as $t \rightarrow \infty$.}
$$
\end{lemma}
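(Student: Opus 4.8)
The plan is to prove this by the standard $\varepsilon$-argument for integrating an asymptotic equivalence, reducing everything to the fact that the divergence of $\int_0^t g(s)\,ds$ absorbs any bounded ``head'' contribution.

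First I would fix $\varepsilon \in (0,1)$ and use the hypothesis $f \sim g$ to choose $T = T_\varepsilon$ so large that both $f$ and $g$ are positive on $[T,\infty)$ and
$$
(1-\varepsilon)\, g(s) \le f(s) \le (1+\varepsilon)\, g(s) \qquad \mbox{for all } s \ge T.
$$
Writing $F(t) = \int_0^t f(s)\,ds$ and $G(t) = \int_0^t g(s)\,ds$, local integrability guarantees that the head integrals $F(T)$ and $G(T)$ are finite constants, and the hypothesis $\int_0^\infty g = \infty$ gives $G(t) \to \infty$ as $t \to \infty$.

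Next, for $t > T$ I would split $F(t) = F(T) + \int_T^t f(s)\,ds$ and apply the two-sided bound on the tail, obtaining
$$
F(T) + (1-\varepsilon)\bigl(G(t) - G(T)\bigr) \;\le\; F(t) \;\le\; F(T) + (1+\varepsilon)\bigl(G(t) - G(T)\bigr).
$$
Dividing through by $G(t)$ and letting $t \to \infty$, every term that does not grow with $t$ (namely $F(T)/G(t)$ and the factors $(1\pm\varepsilon)\,G(T)/G(t)$) tends to $0$ because $G(t)\to\infty$. This yields
$$
1 - \varepsilon \;\le\; \liminf_{t\to\infty} \frac{F(t)}{G(t)} \;\le\; \limsup_{t\to\infty} \frac{F(t)}{G(t)} \;\le\; 1 + \varepsilon.
$$
Since $\varepsilon$ was arbitrary, the limit equals $1$, which is the assertion.

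There is no genuine obstacle here; the only points that require a moment of care are (i) that the one-sided inequality $f \le (1+\varepsilon)g$ holds only beyond $T$, which is precisely why the integral must be split at $T$ rather than estimated globally, and (ii) that the divergence $\int_0^\infty g = \infty$ is essential, since without it the bounded head term $F(T)$ would not be negligible relative to $G(t)$ and the conclusion could fail. The fact that positivity of $f$ and $g$ is assumed only ``for sufficiently large $t$'' is harmless, because the bounds are used only on $[T,\infty)$.
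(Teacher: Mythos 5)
Your proof is correct and follows essentially the same route as the paper's: fix $\varepsilon$, split the integral at a threshold $T$ beyond which $f$ and $g$ are comparable, and use the divergence of $\int_0^t g$ to make the head terms negligible. The only cosmetic difference is that you run the two-sided bound in one pass, whereas the paper proves the $\limsup$ bound explicitly and invokes symmetry for the $\liminf$.
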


\begin{proof} Let $\epsilon >0$. Choose $A>0$ so that for $t \geq A$, 
$
 f(t) \leq (1+\epsilon) g(t). 
$
Then for $t \geq A$, 
\begin{eqnarray*}
\int_0^t f(s) ds = \int_0^A f(s) ds + \int_A^t f(s) ds &\leq&
\int_0^A f(s) ds + (1+\epsilon) \int_A^t g(s) ds \\ &\leq&
\int_0^A f(s) ds + (1+\epsilon) \int_0^t g(s) ds. 
\end{eqnarray*}
Therefore 
$$
\frac{\int_0^t f(s) ds}{\int_0^t g(s) ds} \leq \frac{\int_0^A f(s)ds}{\int_0^t g(s)ds} +1 + \epsilon.
$$
Letting $t \rightarrow \infty$, we find 
$$
\limsup_{t \rightarrow \infty} \frac{\int_0^t f(s) ds}{\int_0^t g(s) ds}  \leq 1 + \epsilon.
$$
By a similar argument, 
$$
\liminf_{t \rightarrow \infty} \frac{\int_0^t f(s) ds}{\int_0^t g(s) ds}  \geq 1 - \epsilon.
$$
Since these conclusions hold for any $\epsilon>0$, the assertion follows.
\end{proof}

\pagebreak

The following theorem shows that in the infinite variance case, the leading-order long-time asymptotic
behavior 
of $\taubar M_{eq}$ is determined by the leading-order long-time asymptotic
behavior of $1-F$.

\begin{theorem}
\label{theorem:F_F_tilde}
Let $F$ and $\tilde{F}$ be infinite-variance distributions with
finite means $\taubar$ and $\tilde{\taubar}$ and equilibrium MSDs $M_{eq}$ and $\tilde{M}_{eq}$, respectively.
Assume that
$$
1-F(t) ~\sim ~1-\tilde{F}(t) ~~~ \mbox{as $t \rightarrow \infty$}.
$$
Then
$$
\taubar M_{eq}(t) \sim \tilde{\taubar} \tilde{M}_{eq}(t) ~~~ \mbox{as $t \rightarrow \infty$}.
$$
\end{theorem}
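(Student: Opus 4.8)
The plan is to reduce the claim to an asymptotic statement about second derivatives and then integrate up twice, each time converting asymptotic equivalence of integrands into asymptotic equivalence of their integrals via Lemma \ref{lemma:integrate_asymptotics}. Throughout I treat the characteristic speed $v$ as common to both walks, so that the prefactor $v^2$ is shared and plays no role; only the duration distributions differ.

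First I would pass to $M_{eq}''$. Combining eq.\ (\ref{eq:Meqpp}) with the identity $1-F_0(t)=\frac{1}{\taubar}\int_t^\infty(1-F(s))\,ds$ from the proof of Lemma \ref{lemma:F_0_properties}(b), I obtain $\taubar M_{eq}''(t)=2v^2\int_t^\infty(1-F(s))\,ds$, and likewise $\tilde\taubar\tilde M_{eq}''(t)=2v^2\int_t^\infty(1-\tilde F(s))\,ds$. Since $1-F\sim 1-\tilde F$, I claim the tail integrals are asymptotically equivalent. This is the convergent-integral analogue of Lemma \ref{lemma:integrate_asymptotics} and needs its own short argument: given $\epsilon>0$, choose $A$ with $(1-\epsilon)(1-\tilde F(s))\le 1-F(s)\le(1+\epsilon)(1-\tilde F(s))$ for $s\ge A$; integrating from any $t\ge A$ to $\infty$ sandwiches the ratio $\int_t^\infty(1-F)/\int_t^\infty(1-\tilde F)$ between $1-\epsilon$ and $1+\epsilon$. (Here I use $1-F>0$ for all $t$, which holds because a compactly supported $\tau$ would have finite variance, contradicting the hypothesis.) Hence $\taubar M_{eq}''(t)\sim\tilde\taubar\tilde M_{eq}''(t)$.

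Next I would integrate this relation twice. Set $p=\taubar M_{eq}''$ and $\tilde p=\tilde\taubar\tilde M_{eq}''$; these are non-negative, continuous, positive for all $t$, and $p\sim\tilde p$. To apply Lemma \ref{lemma:integrate_asymptotics} I must verify $\int_0^\infty p=\int_0^\infty\tilde p=\infty$. Now $\int_0^\infty \taubar M_{eq}''=\taubar M_{eq}'(\infty)$, and by eq.\ (\ref{eq:M_{eq}_prime}) together with Lemma \ref{lemma:F_0_properties}(c) the limit $M_{eq}'(\infty)=2v^2\cdot(\text{expectation of }F_0)=v^2\sigma^2/\taubar=\infty$ in the infinite-variance case; this is exactly where the hypothesis $\sigma^2=\infty$ is used. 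Lemma \ref{lemma:integrate_asymptotics} then yields $\taubar M_{eq}'(t)\sim\tilde\taubar\tilde M_{eq}'(t)$. Repeating the argument with $q=\taubar M_{eq}'\sim\tilde\taubar\tilde M_{eq}'=\tilde q$ (non-negative and positive for large $t$ by Proposition \ref{prop:props}(c),(d), with $\int_0^\infty q=\taubar M_{eq}(\infty)=\infty$ since $M_{eq}$ grows without bound) gives, using $M_{eq}(0)=\tilde M_{eq}(0)=0$, the desired conclusion $\taubar M_{eq}(t)\sim\tilde\taubar\tilde M_{eq}(t)$.

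The main obstacles are two bookkeeping points rather than a deep difficulty. The first is that Lemma \ref{lemma:integrate_asymptotics} is stated only for divergent integrals over $[0,\infty)$, so the passage from $1-F\sim1-\tilde F$ to equivalence of the tails $\int_t^\infty$ cannot invoke it directly and requires the separate estimate above. The second, and the genuinely essential use of the infinite-variance hypothesis, is verifying that $\int_0^\infty M_{eq}''=\infty$ so that Lemma \ref{lemma:integrate_asymptotics} is applicable at the first integration; without $\sigma^2=\infty$ the second derivative would be integrable and both the argument and the conclusion would change (compare the finite-variance formula (\ref{eq:finite_variance})).
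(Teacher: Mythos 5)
Your proof is correct and follows essentially the same route as the paper's: reduce to $\taubar M_{eq}''(t)=2v^2\int_t^\infty(1-F(s))\,ds$, then integrate up twice via Lemma \ref{lemma:integrate_asymptotics}, using Lemma \ref{lemma:F_0_properties}(c) and eq.\ (\ref{eq:M_{eq}_prime}) to check the divergence hypotheses. The only difference is that you supply an explicit $\epsilon$-sandwich argument for the asymptotic equivalence of the convergent tail integrals, a step the paper asserts without proof; that is a worthwhile bit of added rigor but not a different method.
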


\begin{proof} 
$
\taubar M_{eq}'''= 2v^2 (F-1)$, $M_{eq}''(\infty)=0$ and $\tilde{\taubar} \tilde{M}_{eq}'''= 2v^2 (\tilde{F}-1)$, $\tilde{M}_{eq}''(\infty)=0$ imply 
$$
\taubar M_{eq}''(t) = \int_t^\infty 2 v^2 (1-F(s) )~ ds ~~~\sim~~~ \tilde{\taubar} \tilde{M}_{eq}''(t) 
= \int_t^\infty 2 v^2 (1-\tilde{F}(s) )~ ds
$$
as $t \rightarrow \infty$. From Lemma \ref{lemma:F_0_properties}, part (c), and eq.\!\! (\ref{eq:M_{eq}_prime}), we know that  $M_{eq}'(t)$ and $\tilde{M}_{eq}'(t)$ tend to $\infty$ as $t \rightarrow \infty$. This implies, using Lemma \ref{lemma:integrate_asymptotics},  that
$$
\taubar M_{eq}'(t) =  \int_0^t \taubar M_{eq}''(s) ds ~ \sim ~ 
\tilde{\taubar} \tilde{M}_{eq}'(t) =  \int_0^t \tilde{\taubar} \tilde{M}_{eq}''(s) ds.
$$
Because $M_{eq}(t)$ and $\tilde{M}_{eq}(t)$ tend to $\infty$, this in turn implies, again by Lemma \ref{lemma:integrate_asymptotics}, 
$$
\taubar M_{eq}(t) =  \int_0^t \taubar M_{eq}'(s) ds ~ \sim ~ 
\tilde{\taubar} \tilde{M}_{eq}(t) =  \int_0^t \tilde{\taubar} \tilde{M}_{eq}'(s) ds, 
$$
which is the statement of the theorem.
\end{proof}

Proposition \ref{theorem:canonical} and Theorem \ref{theorem:F_F_tilde} together imply the well-known formulas \cite{barkai1997levy,shlesinger1985comment} for the leading-order asymptotic behavior of $M_{eq}$ for $\alpha \in (1,2]$, summarized in the following corollary. We add to the statement of this corollary the result for the finite variance case $\alpha>2$ (see eq.\ (\ref{eq:finite_variance})).

\begin{corollary} 
\label{corollary:M_eq_asymp} 
Suppose that
$$
1-F(t) \sim \left( \frac{t}{t_0} \right)^{-\alpha}
$$
for some $t_0>0$ and $\alpha>1$. 
Then as $t \rightarrow \infty$,
\begin{numcases}
{M_{eq}(t)\sim}
\frac{t_0}{\taubar}~\!
\frac{2 (v t_0)^2}{(\alpha-1) (2-\alpha) (3-\alpha)} \left( \frac{t}{t_0} \right)^{3-\alpha}
& \mbox{if $1<\alpha<2$},
\label{eq:alpha2}
\\
\frac{t_0}{\taubar}  2 (v t_0)^2 \frac{t}{t_0} \log \frac{t}{t_0} & \mbox{if $\alpha = 2$}, 
\label{5.3}
\\
\frac{t_0}{\taubar} (v \sigma)^2 \frac{t}{t_0}
& \mbox{if $\alpha > 2$}.
\label{5.4}
\end{numcases}

\end{corollary}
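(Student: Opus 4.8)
The plan is to split the statement into the infinite-variance regime $1 < \alpha \le 2$ and the finite-variance regime $\alpha > 2$, which I would handle by entirely different mechanisms. For $1 < \alpha \le 2$ I would compare the given $F$ against the explicit canonical power law $\tilde{F}(t) = 1 - (1 + t/t_0)^{-\alpha}$ of eq.\ (\ref{eq:F_origin}), whose equilibrium MSD $\tilde{M}_{eq}$ is known in closed form from Proposition \ref{theorem:canonical}, invoking the comparison Theorem \ref{theorem:F_F_tilde} to transfer that known asymptotic to $M_{eq}$. For $\alpha > 2$ I would merely rewrite the finite-variance formula (\ref{eq:finite_variance}).

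First, for $1 < \alpha \le 2$, I would verify the hypotheses of Theorem \ref{theorem:F_F_tilde}. The tail assumption $1 - F(t) \sim (t/t_0)^{-\alpha}$ with $\alpha > 1$ makes $\int_0^\infty (1-F(s))\,ds$ converge, so $\taubar < \infty$; the same asymptotic with $\alpha \le 2$ makes $\int_0^\infty 2s(1-F(s))\,ds$ diverge, so $F$ has infinite variance. Since $1 - \tilde{F}(t) = (1 + t/t_0)^{-\alpha} \sim (t/t_0)^{-\alpha}$, we have $1 - F(t) \sim 1 - \tilde{F}(t)$, and $\tilde{F}$ is likewise an infinite-variance distribution with finite mean $\tilde{\taubar} = t_0/(\alpha-1)$ by eq.\ (\ref{eq:first_moment}). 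Theorem \ref{theorem:F_F_tilde} then yields $\taubar M_{eq}(t) \sim \tilde{\taubar} \tilde{M}_{eq}(t)$ as $t \to \infty$.

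Second, I would extract the leading-order behavior of the explicit $\tilde{M}_{eq}$. For $1 < \alpha < 2$, formula (\ref{generic}) is dominated by the term $(1 + t/t_0)^{3-\alpha} \sim (t/t_0)^{3-\alpha}$, since $3 - \alpha > 1$ outgrows the subtracted linear and constant terms, so $\tilde{M}_{eq}(t) \sim \frac{2(vt_0)^2}{(2-\alpha)(3-\alpha)} (t/t_0)^{3-\alpha}$; for $\alpha = 2$, formula (\ref{alpha2}) is dominated by $(t/t_0)\log(t/t_0)$, giving $\tilde{M}_{eq}(t) \sim 2(vt_0)^2 (t/t_0)\log(t/t_0)$. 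Multiplying each by $\tilde{\taubar} = t_0/(\alpha-1)$, dividing through by $\taubar$, and collecting the prefactors reproduces (\ref{eq:alpha2}) and (\ref{5.3}) respectively.

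Finally, for $\alpha > 2$ the tail makes the variance finite, so eq.\ (\ref{eq:finite_variance}) already gives $M_{eq}(t) \sim \sigma^2 v^2 t / \taubar$; since $\frac{t_0}{\taubar}(v\sigma)^2 \frac{t}{t_0} = \frac{\sigma^2 v^2 t}{\taubar}$, the factor $t_0$ cancels and this is precisely (\ref{5.4}). I expect no genuine obstacle here: the only points requiring care are confirming infinite variance when $\alpha \le 2$ so that the comparison theorem legitimately applies, and correctly identifying which term dominates the closed-form $\tilde{M}_{eq}$ as $t \to \infty$ in each subcase.
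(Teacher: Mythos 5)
Your proposal is correct and is exactly the paper's argument: the paper likewise derives the cases $1<\alpha\leq 2$ by combining the closed-form MSDs of Proposition \ref{theorem:canonical} for the canonical power law with the comparison Theorem \ref{theorem:F_F_tilde}, and handles $\alpha>2$ via the finite-variance formula (\ref{eq:finite_variance}). The details you supply --- checking finite mean and infinite variance so Theorem \ref{theorem:F_F_tilde} applies, identifying the dominant terms in (\ref{generic}) and (\ref{alpha2}), and tracking the factor $\tilde{\taubar}/\taubar = (t_0/(\alpha-1))/\taubar$ --- are precisely what the paper leaves implicit.
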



\pagebreak
\section{Accuracy of leading-order asymptotics: Examples.} 
In this section we explore the accuracy of 
leading-order asymptotic descriptions of $M_{eq}$ such as those given in Corollary \ref{corollary:M_eq_asymp}. 
 
\subsection{Canonical power laws} \label{subsec:power} For the power laws
$$
F(t) = 1 -  \left( 1 + \frac{t}{t_0} \right)^{-\alpha}
$$
discussed in Section \ref{sec:powerlaws}, the accuracy of the leading-order asymptotics can be ascertained from 
eqs.\ (\ref{generic})--(\ref{alpha3}). The discrepancy between the exact $M_{eq}$ and the leading-order
asymptotics given in Corollary \ref{corollary:M_eq_asymp} is $O(t/t_0)$ for $\alpha \leq 2$, and $O((t/t_0)^{3-\alpha})$
for $2 < \alpha < 3$.
Therefore the leading-order asymptotic
approximation for $M_{eq}(t)$
 is inaccurate when $\alpha \approx 2$ unless $t/t_0$ is very large.
 
 Figure \ref{fig:MC_F_ALPHA_EQUILIBRIUM} illustrates this conclusion, using $t_0=1$, $v=1$, and six different values of $\alpha$. The figure shows, for $0 \leq t \leq 1000$,  the exact $M_{eq}(t)$  (solid),  the leading-order approximations
(dotted), 
and Monte Carlo estimates of $M_{eq}$ (circles), based on 500,000 simulated random walkers  for each value of $\alpha$.

\begin{figure}[ht!]
\begin{center}
\includegraphics[scale=0.6]{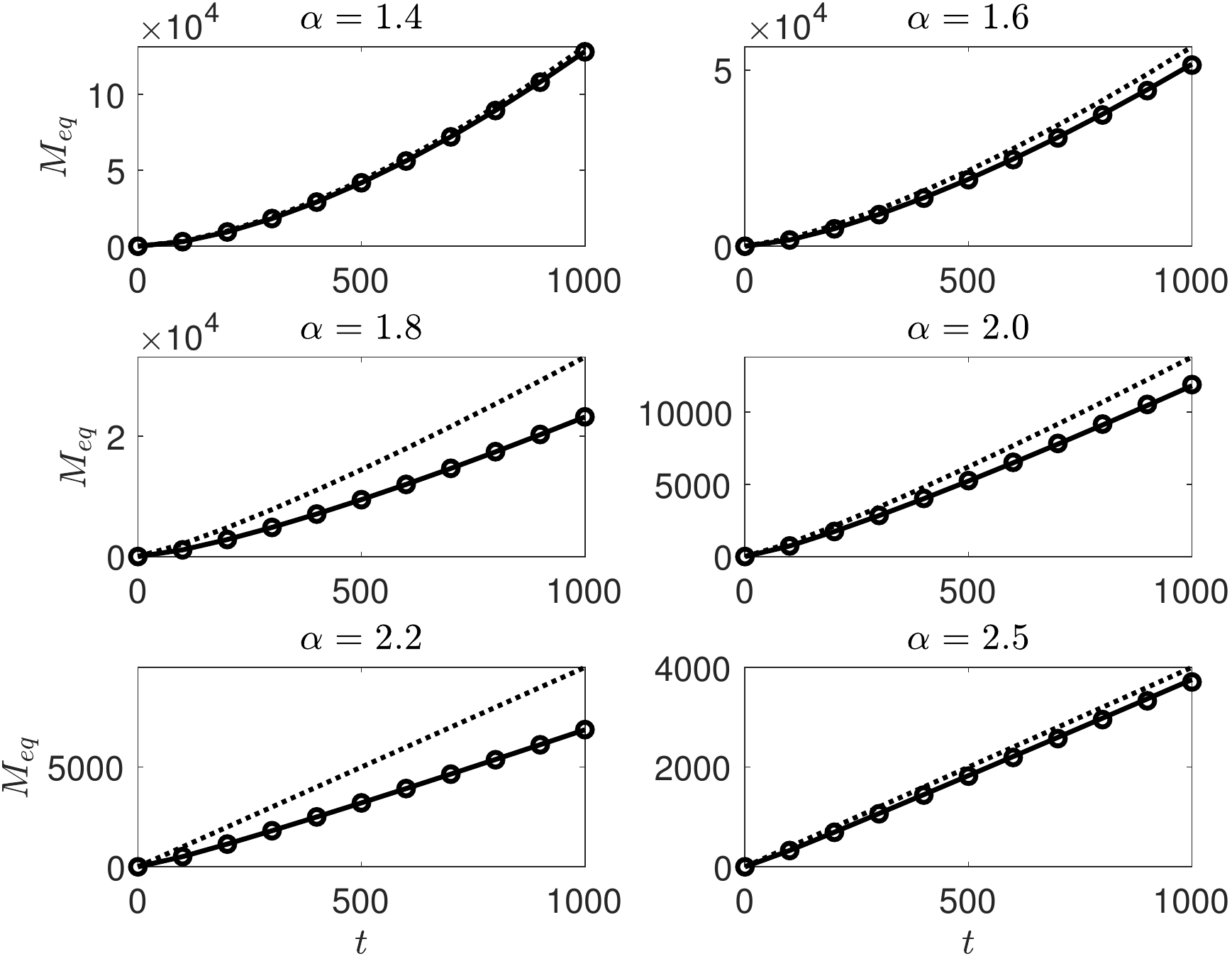}
\end{center}
\caption{Exact $M_{eq}$ (solid), leading-order approximation
as given in Corollary \ref{corollary:M_eq_asymp}  (dotted), and approximations of $M_{eq}$ 
obtained by simulations of 500,000 walkers (circles) for six different values of $\alpha$.}
\label{fig:MC_F_ALPHA_EQUILIBRIUM}
\end{figure}

For $\alpha=2$, the relative difference between $M_{eq}$ and 
its leading-order approximation is approximately $1/\log(t/t_0)$, and for this to be smaller than 0.05, 
corresponding to 5\% accuracy, we need $t/t_0>e^{20} \approx 5 \times 10^8$. Zarfaty et al.\!\! \cite{Zarfaty:2018tz} relate slow convergence of $M_{eq}$ to slow convergence in the Central Limit Theorem (CLT) for $\alpha=2$ \cite{Keller2001rate}, a special case of slow convergence for any slowly varying scaling in the generalized CLT \cite{borgers2018slow}. 
A related, deterministic, problem is that of infinite-horizon billiards, in which the asymptotic MSD also contains a logarithmic factor. For this 
case, Cristadoro et al.\@ derived the first two terms in the asymptotic expansion 
of the MSD \cite{Cristadoro_et_al_2014} and demonstrated numerically that the leading-order term is a poor approximation of the actual MSD for reasonable lengths of time \cite{cristadoro2014measuring}. In \cite{cristadoro2014transport}, the same authors derived an evolution equation for the MSD for random walks on a lattice with exponentially distributed waiting times between flight segments, a problem that can be seen as an abstraction of the 
infinite-horizon Lorentz gas when $\alpha=2$.

\subsection{A general approach to constructing examples}

Instead of beginning with $F$ and computing $M_{eq}$, we can also, for the purposes of constructing
interesting examples, start with $M_{eq}$ and compute $F$ from it, based on the following proposition.

\begin{proposition}
\label{theorem:gamma_conditions}
Let $\mathcal{M} \in C^3([0,\infty))$. 
There exist an expected  speed $v>0$ and a continuous probability distribution function 
$F$ on $(0,\infty)$  with finite mean $\taubar$ such that $M_{eq}=\mathcal{M}$ is the equilibrium MSD
associated with $v$ and $F$ if and only if
\begin{enumerate}
\item $\mathcal{M}(0) = \mathcal{M}'(0)=0$, 
\item $\mathcal{M}''(0)>0$ and $\mathcal{M}''(\infty)=0$, 
\item $\mathcal{M}'''$ is an increasing function with $\mathcal{M}'''(0)<0$ and $\mathcal{M}'''(\infty)=0$.
\end{enumerate}
In that case, 
\begin{equation*}
\label{eq:defF}
v^2 = \frac{\mathcal{M}''(0)}{2}, ~~~
F(t) = 1 - \frac{\mathcal{M}'''(t)}{\mathcal{M}'''(0)}, ~~~
\taubar =  \frac{\mathcal{M}''(0)}{ \left| \mathcal{M}{~\!'''}(0) \right|}.
\end{equation*}
\end{proposition}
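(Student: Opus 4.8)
The plan is to read the entire statement through the lens of Proposition \ref{propode}, which already packages the two facts I need: that any equilibrium MSD satisfies $M_{eq}'''(t) = \frac{2v^2}{\taubar}(F(t)-1)$ together with the boundary data $M_{eq}(0)=M_{eq}'(0)=0$, $M_{eq}''(0)=2v^2$, and $M_{eq}''(\infty)=0$. Since both directions of the ``if and only if'' hinge on this single identity, I would treat it as a dictionary translating properties of $F$ into properties of $\mathcal{M}=M_{eq}$ and back. The displayed formulas for $v^2$, $F$, and $\taubar$ are exactly what one obtains by solving this dictionary for $F$ and $v$, so verifying them will be folded into the sufficiency argument.

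For necessity I would assume $\mathcal{M}=M_{eq}$ arises from some $v>0$ and continuous $F$ with finite mean, and simply read off the three conditions. Conditions (1) and (2) are immediate from the boundary data in Proposition \ref{propode}. For (3) I use the identity directly: because a continuous distribution function on $(0,\infty)$ is non-decreasing with $F(0^+)=0$ and $F(\infty)=1$, the function $\mathcal{M}'''=\frac{2v^2}{\taubar}(F-1)$ is non-decreasing, with $\mathcal{M}'''(0)=-\frac{2v^2}{\taubar}<0$ and $\mathcal{M}'''(\infty)=0$. (Here I read ``increasing'' as ``non-decreasing'', matching the fact that a continuous CDF need not be strictly monotone.)

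For sufficiency I would define $v^2$, $F$, and $\taubar$ by the displayed formulas and verify three things in turn. First, that $F$ is a genuine continuous distribution function: continuity is inherited from $\mathcal{M}\in C^3$; since $\mathcal{M}'''$ increases from $\mathcal{M}'''(0)<0$ up to $\mathcal{M}'''(\infty)=0$, we have $\mathcal{M}'''(t)\le 0$ for all $t$, so the ratio $\mathcal{M}'''(t)/\mathcal{M}'''(0)$ lies in $[0,1]$ and is non-increasing, whence $F(t)=1-\mathcal{M}'''(t)/\mathcal{M}'''(0)$ is non-decreasing with $F(0)=0$ and $F(\infty)=1$. Second, that the mean is finite and equals $\taubar$: using $1-F(s)=\mathcal{M}'''(s)/\mathcal{M}'''(0)$, I compute $\int_0^\infty(1-F(s))\,ds=\frac{1}{\mathcal{M}'''(0)}\int_0^\infty \mathcal{M}'''(s)\,ds=\frac{\mathcal{M}''(\infty)-\mathcal{M}''(0)}{\mathcal{M}'''(0)}=\frac{\mathcal{M}''(0)}{|\mathcal{M}'''(0)|}$, using $\mathcal{M}''(\infty)=0$. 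Third, that the equilibrium MSD built from this $v$ and $F$ really is $\mathcal{M}$: substituting the definitions gives $\frac{2v^2}{\taubar}=|\mathcal{M}'''(0)|=-\mathcal{M}'''(0)$, hence $M_{eq}'''=\frac{2v^2}{\taubar}(F-1)=\mathcal{M}'''$, while the boundary conditions $M_{eq}(0)=M_{eq}'(0)=0$ and $M_{eq}''(0)=2v^2=\mathcal{M}''(0)$ coincide with those of $\mathcal{M}$; integrating the common third derivative three times then forces $M_{eq}\equiv\mathcal{M}$.

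The only delicate point, and where I would concentrate the writing, is the sufficiency check that $F$ is an honest distribution function --- in particular that conditions (2) and (3) together guarantee $\mathcal{M}'''(t)\le 0$ for all $t$ (so that $F$ maps into $[0,1]$) and that $F$ is monotone. This is precisely where the analytic hypotheses on $\mathcal{M}$ get converted into the probabilistic requirements on $F$, and it rests only on the monotonicity and limits of $\mathcal{M}'''$ rather than on any fourth derivative, which is fortunate since $\mathcal{M}$ is assumed merely $C^3$. Everything else is routine integration given Proposition \ref{propode}.
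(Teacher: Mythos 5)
Your proof is correct and takes exactly the paper's approach: the paper's entire proof reads ``This is a straightforward consequence of Proposition \ref{propode},'' and your write-up is precisely the detailed verification of that claim, including the two points worth making explicit (reading ``increasing'' as non-decreasing so that general continuous distribution functions are covered, and checking that monotonicity of $\mathcal{M}'''$ together with $\mathcal{M}'''(\infty)=0$ forces $\mathcal{M}'''\le 0$, so $F$ maps into $[0,1]$). Nothing is missing.
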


\begin{proof} 
This is a straightforward consequence of Proposition \ref{propode}.
\end{proof}

\subsection{Accurate leading-order asymptotics}
Examples in which there are accurate leading-order asymptotic descriptions of $M_{eq}$ can be constructed using 
Proposition
 \ref{theorem:gamma_conditions}. For instance, 
 $$
M_{eq} = v^2 t \log(1+t)  = v^2 t \log t + O(1)
$$
when 
$$
F(t) = 1 - \frac{3+t}{3(1+t)^3}, 
$$
and for $1 < \alpha < 2$, 
$$
M_{eq}(t) = v^2t^2(1+t)^{1-\alpha} = v^2t^{3-\alpha}+O(t^{2-\alpha}) 
$$
when 
$$
F(t) = 1-(1+t)^{-\alpha}\left(1-\frac{\alpha t}{1+t}+\frac{\alpha(\alpha+1)t^2}{6(1+t)^2}\right).
$$

\subsection{Logarithmic factors for $\bm{\alpha \in (1,2)}$}
\label{sec:logfactor}
What happens when the distribution function is asymptotically close to, but not exactly, a power law? For example, set
$$
F(t) = 1 - \frac{1+\log(1+t)}{(1+t)^{3/2}}.
$$
A closed-form equilibrium MSD can be derived for this distribution function, as well:
\begin{eqnarray*}
\nonumber
M_{eq}(t) &=& v^2 \left( \frac{8}{9} \left(t +1 \right)^{3/2} \log \left( t + 1 \right) + 
\frac{8}{27} \left( t + 1 \right)^{3/2} - \frac{4}{3} t - \frac{8}{27} \right) \\
\label{eq:compare}
&=& v^2 \frac{8}{9} t^{3/2}  \left(  \log \left( t  \right) + \frac{1}{3} \right)
+ O(t).
\end{eqnarray*} The leading-order approximation
$
\label{eq:alpha_3_2}
M_{eq}(t) \sim v^2 \frac{8}{9} t^{3/2} \log t
$
is very poor because the next-most-significant term is $O(t^{3/2})$.

\section{Super-linear, linear, and sub-linear terms}

We saw in the last section that linear correction terms can have considerable implications for the accuracy of the leading-order approximation of $M_{eq}$ when $\alpha$ is near 2. Here we show that that for $\alpha \in (1,2]$, the linear contributions to $M_{eq}$ depend on the entire distribution $F$, 
while super-linear terms, for a given $\taubar$, only depend on the asymptotic behavior of $F$.

The following theorem shows that the entire distribution $F$ is needed to compute linear contributions to $M_{eq}$: The value of  
$\taubar$ and the precise tail of $F$ (i.e., $F(t)$ for $t \geq A$, for some $A>0$) do not determine the linear 
contributions.

\begin{theorem}
\label{thm:A}
Let $F$ be continuously differentiable, with equilibrium MSD $M_{eq}$. Then for all $A>0$ such that $F(A)>0$, there exists a distribution function $\tilde{F}$, with equilibrium
MSD $\tilde{M}_{eq}$, such that the means of $F$ and $\tilde{F}$ agree, $F$ and $\tilde{F}$ agree on $[A,\infty)$, and 
$$
|M_{eq}(t)-\tilde{M}_{eq}(t)|\geq O(t)
$$
as $t \rightarrow \infty$.
\end{theorem}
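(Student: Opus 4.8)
The plan is to work directly from the single-integral representation of $M_{eq}$ in Proposition \ref{prop:M_eq_from_F}, keeping the expected speed $v$ fixed and arranging that $\tilde F$ has the same mean, $\tilde\taubar=\taubar$. Write $D=\tilde F-F$; since $F$ and $\tilde F$ are required to agree on $[A,\infty)$, the function $D$ is supported in $[0,A]$. Using the first representation in Proposition \ref{prop:M_eq_from_F} for both $M_{eq}$ and $\tilde M_{eq}$ (the prefactors $1/\taubar$ and $1/\tilde\taubar$ coincide), the difference becomes, for every $t\ge A$,
\begin{equation*}
M_{eq}(t)-\tilde M_{eq}(t)=-\frac{v^2}{\taubar}\int_0^A D(u)(t-u)^2\,du .
\end{equation*}
Expanding $(t-u)^2=t^2-2tu+u^2$ exhibits the difference as a quadratic polynomial in $t$ whose coefficients are the zeroth, first, and second moments of $D$ on $[0,A]$.

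Next I would note that equality of the means is exactly the condition $\int_0^A D(u)\,du=0$, because $\taubar-\tilde\taubar=\int_0^\infty D$ and $D$ vanishes on $[A,\infty)$. This annihilates the $t^2$ coefficient, leaving
\begin{equation*}
M_{eq}(t)-\tilde M_{eq}(t)=\frac{2v^2}{\taubar}\Bigl(\int_0^A uD(u)\,du\Bigr)\,t-\frac{v^2}{\taubar}\int_0^A u^2 D(u)\,du
\end{equation*}
for $t\ge A$. Thus it suffices to produce a continuous distribution function $\tilde F$, agreeing with $F$ on $[A,\infty)$, with $\int_0^A D=0$ and $\int_0^A uD\neq0$; the second condition makes the slope a nonzero constant, so $|M_{eq}-\tilde M_{eq}|$ grows exactly linearly, which is the claimed conclusion.

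For the construction I would invoke the hypothesis $F(A)>0$. Since $F$ is $C^1$ with $F(0)=0$, we have $\int_0^A F'=F(A)>0$, so by continuity there is a compact subinterval $[c-\eta,c+\eta]\subset(0,A)$ on which $F'\ge\delta>0$. Let $\psi$ be a fixed nonnegative $C^\infty$ bump supported in that subinterval and set $D=-\epsilon\psi'$, i.e.\ $\tilde F=F-\epsilon\psi'$, for a small $\epsilon>0$. Then $\int_0^A D=-\epsilon[\psi]_{c-\eta}^{c+\eta}=0$ automatically, while integration by parts yields $\int_0^A uD\,du=\epsilon\int_0^A\psi(u)\,du>0$. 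For $\epsilon$ small enough that $\epsilon\|\psi''\|_\infty<\delta$, the derivative $\tilde F'=F'-\epsilon\psi''$ remains nonnegative, so $\tilde F$ is a genuine $C^1$ distribution function with $\tilde F(0)=0$ and $\tilde F\to1$, sharing the tail (hence the mean) of $F$.

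The main obstacle I anticipate is not the algebra but keeping $\tilde F$ a legitimate distribution function while forcing $\int_0^A uD\neq0$: a perturbation concentrated near a single point has first moment close to its vanishing zeroth moment, so one must ensure the first moment is genuinely nonzero, and monotonicity of $\tilde F$ must be preserved. The \emph{derivative-of-a-bump} choice $D=-\epsilon\psi'$ resolves both difficulties at once, since it makes $\int_0^A uD$ equal to the manifestly positive $\epsilon\int\psi$ and uses $\epsilon$ to control $\|\tilde F'-F'\|_\infty$; and the condition $F(A)>0$ is precisely what supplies an interval inside $(0,A)$ with $F'>0$ in which to place the bump.
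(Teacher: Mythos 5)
Your proof is correct and takes essentially the same route as the paper: both reduce, via Proposition \ref{prop:M_eq_from_F}, the difference $M_{eq}(t)-\tilde{M}_{eq}(t)$ for $t\geq A$ to an integral of $F-\tilde{F}$ over $[0,A]$ against a polynomial in $t$, so that equality of means annihilates the quadratic term and a nonzero first moment $\int_0^A (F-\tilde{F})(u)\,u\,du$ yields the linear growth. The only difference is that the paper dismisses the existence of such an $\tilde{F}$ as ``of course, always possible,'' whereas you supply an explicit construction ($\tilde{F}=F-\epsilon\psi'$ with a smooth bump $\psi$ placed in a subinterval of $(0,A)$ where $F'\geq\delta>0$, which exists because $F(A)>0$), and you verify that $\tilde{F}$ remains a genuine monotone distribution function --- a detail the paper leaves to the reader.
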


\begin{proof}
By eq.\ (\ref{eq:M_eq_from_F2}), if $F$ and $\tilde F$ agree on $[A,\infty)$, then for all $t\geq A$,
$$
M_{eq}(t)-\tilde{M}_{eq}(t)=-\frac{v^2}{\taubar}\int_0^A( F (u)-\tilde F(u))(2tu-u^2) du.
$$
Hence, to complete the proof, it suffices to find an increasing $\tilde F$ such that 
$$
\int_0^A(F(u)-\tilde F (u))du=0 ~~~~\mbox{but}~~~~\int_0^A (F(u)-\tilde F(u))udu\neq0,
$$
which is, of course, always possible.
\end{proof} 

On the other hand, the following theorem shows that the value of $\taubar$ and knowledge of $F$ up to a ``finite variance" piece are sufficient to determine 
super-linear contributions to $M_{eq}$.
\begin{theorem}
\label{theorem:general_M_{eq}_expansion}
Let
$$
1-F(t) = G(t) + H(t), 
$$
where $G$ and $H$ are integrable functions of $t>0$ with 
\begin{equation}
\label{eq:finite_variance_sort_of}
\int_0^\infty |H(t)| t ~\! dt < \infty.
\end{equation}
Then 
\begin{equation}
\label{eq:M_{eq}_general_expansion}
M_{eq}(t) = 
\frac{v^2}{\taubar} \left( t^2\int_t^\infty G(u) du ~ + \int_0^t G(u) (2tu-u^2) ~\!du  \right) + O(t).
 \end{equation}
\end{theorem}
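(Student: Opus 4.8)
The plan is to read the exact formula for $M_{eq}$ off Proposition \ref{prop:M_eq_from_F}, substitute the decomposition $1-F = G+H$, and show that the contribution coming from $H$ is $O(t)$; what then remains is precisely the asserted main term in (\ref{eq:M_{eq}_general_expansion}), namely the same expression with $G$ in place of $1-F$. Concretely, eq.\ (\ref{eq:M_eq_from_F2}) reads
\[
M_{eq}(t)=\frac{v^2}{\taubar}\left(t^2\int_t^\infty (1-F(u))\,du+\int_0^t (1-F(u))(2tu-u^2)\,du\right),
\]
so by linearity of the integral this splits into a ``$G$-part,'' which is exactly the right-hand side of (\ref{eq:M_{eq}_general_expansion}) without the $O(t)$, plus an ``$H$-part''
\[
R(t):=\frac{v^2}{\taubar}\left(t^2\int_t^\infty H(u)\,du+2t\int_0^t H(u)\,u\,du-\int_0^t H(u)\,u^2\,du\right).
\]
It therefore suffices to prove $R(t)=O(t)$.

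Next I would estimate the three pieces of $R$ separately, writing $C:=\int_0^\infty |H(u)|\,u\,du$, which is finite by the hypothesis (\ref{eq:finite_variance_sort_of}). For the tail term, since $u\geq t$ throughout the domain of integration, one has $\int_t^\infty |H(u)|\,du\leq \frac{1}{t}\int_t^\infty |H(u)|\,u\,du\leq C/t$, so that $t^2\int_t^\infty |H(u)|\,du\leq Ct$. The middle term is immediate: $2t\int_0^t |H(u)|\,u\,du\leq 2Ct$. For the last term, using $u^2\leq tu$ on $[0,t]$ gives $\int_0^t |H(u)|\,u^2\,du\leq t\int_0^t |H(u)|\,u\,du\leq Ct$. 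Combining these three bounds yields $|R(t)|\leq (v^2/\taubar)\,4Ct$, which is $O(t)$ and finishes the argument.

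The proof is essentially mechanical once Proposition \ref{prop:M_eq_from_F} is available; the only step requiring a little care is the tail term $t^2\int_t^\infty H(u)\,du$, where plain integrability of $H$ is not enough and one must exploit $u\geq t$ to trade the unweighted tail integral for the weighted integral controlled by (\ref{eq:finite_variance_sort_of}). I would also remark at the outset that the integrability of $H$ (and hence of $G=(1-F)-H$) is what makes the decomposition and the individual integrals in both the $G$-part and the $H$-part well defined, so no convergence issues arise beyond the finiteness of $C$ and of $\taubar$; indeed the bound $\int_t^\infty |H|\leq C/t$ already guarantees that the tail integral converges for every $t>0$.
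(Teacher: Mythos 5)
Your proposal is correct and follows essentially the same route as the paper: both start from eq.\ (\ref{eq:M_eq_from_F2}) of Proposition \ref{prop:M_eq_from_F}, split off the $H$-contribution by linearity, and bound its three pieces using exactly the same tricks --- trading $\int_t^\infty |H(u)|\,du \leq \frac{1}{t}\int_t^\infty |H(u)|\,u\,du$ for the tail term and $u^2 \leq tu$ on $[0,t]$ for the last term. The only cosmetic difference is that the paper records the tail estimate as $o(1/t)$ rather than $O(1/t)$, which is not needed for the stated conclusion.
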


Assumption (\ref{eq:finite_variance_sort_of}) makes precise the 
assertion that $H$ is a ``finite variance piece"; compare  eq.\ (\ref{eq:sigma_in_terms_of_F}).

\begin{proof} 
Because of Proposition \ref{prop:M_eq_from_F} and assumption
(\ref{eq:finite_variance_sort_of}), it is enough to prove 
$$
\left| \int_t^\infty H(u) du \right|  = O \left( \frac{1}{t} \right)
~~~\mbox{and} ~~~ \left| 
\int_0^t H(u) u^2 du \right| = O(t). 
$$
Both follow from (\ref{eq:finite_variance_sort_of}): 
$$
\left| \int_t^\infty H(u) du \right|  \leq \int_t^\infty |H(u)| u \frac{1}{u}  du \leq \int_t^\infty |H(u)| u du \frac{1}{t} = o\left( \frac{1}{t} \right)
$$
and 
$$
\left| 
\int_0^t H(u) u^2 du \right|  \leq \int_0^t |H(u)| u du ~\! t \leq \int_0^\infty |H(u)| u du ~\! t = O(t).
$$
\end{proof}

In particular, this proposition implies 
that an asymptotic expansion of the slowly converging part of $1-F$
translates into an asymptotic expansion of the super-linear part of $M_{eq}$: 

\begin{corollary} 
Suppose that 
\begin{equation}
\label{eq:F_asymptotics}
1-F(t) =
\sum_{j=1}^n C_j t^{-p_j}  + C t^{-2} + O(t^{-q})
\end{equation}
as $t \rightarrow \infty$, 
where $n \geq 0$,
$$
1 < p_1<\ldots < p_n < 2,  ~~~~ q > 2,
$$
and the $C_j$ and $C$ are constants.
Then 
\begin{equation}
\label{eq:M_{eq}_asymptotics}
M_{eq}(t) = 
\frac{2v^2}{\taubar} \left[  \sum_{j=1}^n \frac{C_j}{(p_j-1) (2-p_j) (3-p_j)} t^{3-p_j} + C t \log t 
 \right] + O(t).
 \end{equation}
\end{corollary}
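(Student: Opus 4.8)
The plan is to read (\ref{eq:M_{eq}_asymptotics}) off Theorem \ref{theorem:general_M_{eq}_expansion} directly; once the right splitting $1-F = G+H$ is chosen, everything reduces to evaluating the explicit integral in (\ref{eq:M_{eq}_general_expansion}). The natural idea is to let $G$ carry exactly the slowly decaying part of the tail, namely $\sum_{j=1}^n C_j t^{-p_j} + C t^{-2}$, and to let $H$ absorb the genuinely finite-variance remainder $O(t^{-q})$. The subtlety is that the expansion (\ref{eq:F_asymptotics}) holds only as $t\to\infty$, and $t^{-p_j}$ and $t^{-2}$ are not integrable near the origin, so $G$ cannot literally equal this sum on all of $(0,\infty)$.

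I would therefore set $G(t) = \sum_{j=1}^n C_j t^{-p_j} + C t^{-2}$ for $t \geq 1$ and $G(t)=0$ for $0<t<1$, and put $H = 1-F-G$. The first task is to verify the hypotheses of Theorem \ref{theorem:general_M_{eq}_expansion}: $G$ is integrable on $(0,\infty)$ since each $p_j>1$ and $t^{-2}$ is integrable at infinity; $H$ is integrable because $1-F$ is (the mean $\taubar$ is finite); and, crucially, $H$ must satisfy the weighted bound (\ref{eq:finite_variance_sort_of}). For $0<t<1$ this is immediate from $0\le 1-F\le 1$, while for $t\ge 1$ we have $H(t)=O(t^{-q})$ by (\ref{eq:F_asymptotics}), so $\int_1^\infty |H(t)|\,t\,dt$ is dominated by a constant multiple of $\int_1^\infty t^{1-q}\,dt$, which is finite \emph{precisely because} $q>2$. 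This is exactly where the hypothesis $q>2$ enters, and it also explains why the borderline $Ct^{-2}$ term must be placed in $G$ rather than in $H$: the weight $t$ turns $t^{-2}$ into the non-integrable $t^{-1}$, so that term is genuinely super-linear and cannot be treated as a finite-variance perturbation.

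With the splitting validated, Theorem \ref{theorem:general_M_{eq}_expansion} reduces the claim to evaluating $t^2\int_t^\infty G(u)\,du + \int_0^t G(u)(2tu-u^2)\,du$ up to an $O(t)$ error. Since $G$ vanishes on $(0,1)$, for $t\ge 1$ each piece is an elementary power-law integral, and only the asymptotics as $t\to\infty$ matter, so the piecewise cutoff at $t=1$ is harmless. Collecting the three contributions to the coefficient of $t^{3-p_j}$ (one from $t^2\int_t^\infty G$, two from $\int_0^t G(2tu-u^2)$), a short computation shows they combine over the common denominator $(p_j-1)(2-p_j)(3-p_j)$ to the numerator $(2-p_j)(3-p_j)+(4-p_j)(p_j-1)=2$, giving the coefficient $2C_j/[(p_j-1)(2-p_j)(3-p_j)]$ claimed in (\ref{eq:M_{eq}_asymptotics}). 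The $Ct^{-2}$ term is handled identically, except that $2t\int_1^t Cu^{-1}\,du = 2Ct\log t$ supplies the logarithmic term while the remaining pieces are $O(t)$. Multiplying through by $v^2/\taubar$ yields the stated $2v^2/\taubar$ form, and folding all linear and constant leftovers (from the integration limits and the cutoff) into the $O(t)$ error completes the derivation.

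The main obstacle is conceptual rather than computational: recognizing that the decomposition must be arranged so that $H$ has finite ``variance'' in the sense of (\ref{eq:finite_variance_sort_of}), and in particular that the borderline $t^{-2}$ term sits on the super-linear side of this dichotomy (producing $t\log t$) rather than being swept into the remainder. Once the splitting is set up so that the hypotheses of Theorem \ref{theorem:general_M_{eq}_expansion} hold, the rest is bookkeeping of power-law integrals, with care needed only to confirm that the $O(t)$ and $O(1)$ contributions arising from the cutoff at $t=1$ and from the constant integration limits are all absorbable into the $O(t)$ error.
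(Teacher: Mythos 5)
Your proposal is correct and follows essentially the same route as the paper's own proof: the identical splitting $G(t)=\sum_j C_j t^{-p_j}+Ct^{-2}$ for $t\geq 1$, $G(t)=0$ for $t<1$, with $H=1-F-G$, followed by an application of Theorem \ref{theorem:general_M_{eq}_expansion} and evaluation of the resulting power-law integrals. The only difference is that you spell out the verification of hypothesis (\ref{eq:finite_variance_sort_of}) and the coefficient algebra, both of which the paper leaves implicit.
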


\begin{proof} 
We set 
$$
G(t) = 
\left\{ \begin{array}{cl} 
\sum_{j=1}^n C_j t^{-p_j}  + C t^{-2}  & \mbox{for $t \geq 1$}, \\
0 & \mbox{for $t<1$},
\end{array}
\right.
$$
and
$$
H(t) = 1-F(t) - G(t) = O(t^{-q}).
$$
Theorem \ref{theorem:general_M_{eq}_expansion} implies that
$$
M_{eq}(t) = \frac{v^2}{\taubar} \left[t^2  \int_t^\infty  \left( \sum_{j=1}^n C_j u^{-p_j}  + C u^{-2}  \right) du + 
2t\int_1^t  \left( \sum_{j=1}^n C_j u^{-p_j}  + C u^{-2}  \right) u du  
\right.
$$
$$
\left.  - \int_1^t  \left( \sum_{j=1}^n C_j u^{-p_j}  + C u^{-2}  \right) u^2 du  \right] +O(t).
$$
The assertion  follows by evaluating the integrals.
\end{proof}

What happens if $F$ doesn't follow a power law exactly?  Power law bounds on $F$ give us bounds on the MSD. For instance,
we have the following result.

\begin{corollary} 
\begin{enumerate} 
\item[(a)]
Let $\alpha \in (1,2)$, $C>0$. If 
$$
1-F(t) \leq C t^{-\alpha}
$$
for all $t >0 $, then
$$
M_{eq}(t) \leq  \frac{2v^2}{\taubar}  \frac{C}{(\alpha-1)(2-\alpha)(3-\alpha)} t^{3-\alpha}
$$
for all $t>0$.
Similarly, if $1-F(t) \geq C t^{-\alpha}$
for all $t >0 $, then the reverse inequality holds.
\item[(b)]
Let $C>0$ and $A >0$. If 
$$
1-F(t) \leq C t^{-2}
$$
for all $t \geq A$, then
$$
M_{eq}(t) \leq  \frac{v^2}{\taubar} \left[ CA+ 
\int_0^A (1-F(u)) (2tu-u^2) du
+ 2C t \log \frac{t}{A} \right]
$$
for $t \geq A$. Similarly, if $1-F(t) \geq C t^{-2}$ for all $t\geq A$, then the reverse inequality holds.

\end{enumerate}
\end{corollary}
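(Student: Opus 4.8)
The plan rests entirely on the exact formula~(\ref{eq:M_eq_from_F2}) of Proposition~\ref{prop:M_eq_from_F},
$$
M_{eq}(t) = \frac{v^2}{\taubar}\left( t^2\int_t^\infty (1-F(u))\,du + \int_0^t (1-F(u))(2tu-u^2)\,du \right).
$$
The key structural observation I would record first is that both integrands here are non-negative over their respective domains of integration: $1-F\geq 0$ always, and for $0\leq u\leq t$ the weight $2tu-u^2=u(2t-u)$ is non-negative. Consequently $M_{eq}(t)$ is monotone with respect to pointwise replacement of $1-F$ by a larger (resp.\ smaller) function. This monotonicity reduces both parts of the corollary to inserting the given power-law bound into the formula and evaluating elementary integrals; the reverse inequalities then follow by flipping every inequality, so I would treat only the upper-bound cases.

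For part~(a), since $1-F(t)\leq Ct^{-\alpha}$ holds for \emph{all} $t>0$, monotonicity gives $M_{eq}(t)\leq \frac{v^2}{\taubar}\bigl(t^2\int_t^\infty Cu^{-\alpha}\,du+\int_0^t Cu^{-\alpha}(2tu-u^2)\,du\bigr)$. For $\alpha\in(1,2)$ each of the three resulting power integrals converges (the exponents $1-\alpha$ and $2-\alpha$ are both $>-1$ on $(0,t)$, and $-\alpha<-1$ on $(t,\infty)$), and each contributes a multiple of $t^{3-\alpha}$, with respective coefficients $\frac{1}{\alpha-1}$, $\frac{2}{2-\alpha}$, and $-\frac{1}{3-\alpha}$. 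A one-line algebraic simplification shows that the sum of these three coefficients collapses to $\frac{2}{(\alpha-1)(2-\alpha)(3-\alpha)}$, which is exactly the claimed constant.

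For part~(b), the bound $1-F(t)\leq Ct^{-2}$ is only available for $t\geq A$, so I would split the second integral at $A$. The piece $\int_0^A (1-F(u))(2tu-u^2)\,du$ is kept verbatim, while on $\int_A^t$ and on the tail $\int_t^\infty$, where $u\geq A$, I substitute $Cu^{-2}$. The tail contributes $t^2\int_t^\infty Cu^{-2}\,du=Ct$, and the middle piece contributes $\int_A^t Cu^{-2}(2tu-u^2)\,du=2Ct\log\frac{t}{A}-Ct+CA$. The only point requiring care is that the linear terms $+Ct$ and $-Ct$ cancel exactly; this cancellation is precisely why the stated bound has the clean form $CA+\int_0^A(\cdots)\,du+2Ct\log\frac{t}{A}$ with no residual linear term. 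Assembling the pieces reproduces the asserted inequality, and the $\geq$ case is identical. There is no genuine obstacle in this corollary: once the non-negativity of the kernel in~(\ref{eq:M_eq_from_F2}) is noted, everything is bookkeeping, the only subtlety being the $\pm Ct$ cancellation in part~(b).
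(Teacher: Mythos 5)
Your proof is correct and is exactly the paper's approach: the paper's entire proof reads ``These are immediate consequences of Proposition \ref{prop:M_eq_from_F},'' and your argument simply fills in those details --- the non-negativity of $1-F$ and of the kernel $u(2t-u)$, the resulting monotonicity of the representation (\ref{eq:M_eq_from_F2}) under pointwise replacement of $1-F$, and the elementary power-law integrals (including the coefficient identity $\frac{1}{\alpha-1}+\frac{2}{2-\alpha}-\frac{1}{3-\alpha}=\frac{2}{(\alpha-1)(2-\alpha)(3-\alpha)}$ and the $\pm Ct$ cancellation in part (b)), all of which check out.
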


\begin{proof}
These are immediate consequences of Proposition \ref{prop:M_eq_from_F}.
\end{proof}

\section{Asymptotic behavior of $\bm{M_{tr}}$}
\label{sec:Mn}
We next relate $M_{eq}$ and $M_{tr}$ to each other, in order to deduce asymptotic behavior of $M_{tr}$ from the asymptotic behavior of $M_{eq}$. (See also \cite{barkai1997levy}, where expressions for $M_{eq}$ and $M_{tr}$ are obtained from assumed forms of the Laplace transform of the duration density in a more general setting.)
Let $t>0$ and denote by $X_{eq}$ the position of the equilibrium process. Then
the conditional expectation of $X_{eq}(t)^2$, given that $T_0>t$, is $\left(vt\right)^2$. 
 If $T_0 \leq t$, then $X_{eq}(t)$ consists of
the initial segment of duration $T_0$, and the rest. The displacements experienced in these
two segments are not independent --- when $T_0$ is larger, the second segment is briefer. 
They are, however, uncorrelated, and therefore the variances of the displacements add. 
Writing 
as before $\rho_0 = F_0'=(1-F)/\taubar$, we have
\begin{equation}
\label{eq:M_{eq}_from_M}
M_{eq}(t) = (1-F_0(t)) (vt)^2 +  \int_0^t \rho_0(s) ((vs)^2+M_{tr}(t-s)) ds.
\end{equation}
Equation (\ref{eq:M_{eq}_from_M}) can be read as an equation representing $\rho_0 \ast M_{tr}$ 
in terms of $M_{eq}$:
$$
(\rho_0 \ast M_{tr})(t) = M_{eq}(t) - (1-F_0(t)) (vt)^2 -  \int_0^t \rho_0(s) (v s)^2 ds.
$$
By integration by parts, this can be re-written more simply: 
\begin{equation}
\label{eq:smoothed_M_tr}
(\rho_0 \ast M_{tr})(t) = M_{eq}(t) - 2v^2 \int_0^t (1-F_0(s)) ~\! s ~\! ds.
\end{equation}

\begin{lemma}
\label{lemma:rho_ast_M_{tr}_sim_M_{tr}}
$$
\rho_0 \ast M_{tr}(t) \sim M_{tr}(t) ~~~ \mbox{as $t \rightarrow \infty$.}
$$
\end{lemma}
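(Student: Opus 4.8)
The plan is to estimate the difference $M_{tr}(t) - (\rho_0\ast M_{tr})(t)$ and show it is $o(M_{tr}(t))$. Since $\rho_0 = (1-F)/\taubar$ integrates to $1$ over $(0,\infty)$ (by the definition of $\taubar$), I would write $M_{tr}(t) = \int_0^\infty \rho_0(s)\,M_{tr}(t)\,ds$ and split the resulting difference as
$$
M_{tr}(t) - (\rho_0\ast M_{tr})(t) = \int_0^t \rho_0(s)\big(M_{tr}(t)-M_{tr}(t-s)\big)\,ds + M_{tr}(t)\,(1-F_0(t)),
$$
where the last term collects the contribution of $s>t$. It then suffices to show that each of the two terms on the right is $o(M_{tr}(t))$; note that $M_{tr}(t)\to\infty$ by Proposition~\ref{prop:props}(d), so division by $M_{tr}(t)$ is legitimate for large $t$.

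The tail term is immediate: $F_0$ is a probability distribution function, so $1-F_0(t)\to 0$, and hence $M_{tr}(t)(1-F_0(t)) = o(M_{tr}(t))$. For the integral, the two ingredients I would use are the monotonicity of $M_{tr}$ (Proposition~\ref{prop:props}(c)), which makes the integrand nonnegative, and the estimate $M_{tr}(t)-M_{tr}(t-s)\le 2vs\sqrt{M_{tr}(t)}$ for $0\le s\le t$, established in the proof of Proposition~\ref{prop:props}(e).

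The hard part will be that in the infinite-variance case $\rho_0$ may itself have infinite first moment (indeed $\int_0^\infty s\rho_0(s)\,ds = \sigma^2/(2\taubar)$ by Lemma~\ref{lemma:F_0_properties}(c)), so integrating the bound $2vs\sqrt{M_{tr}(t)}$ over all of $[0,t]$ need not give a usable estimate. I would get around this with a two-scale split: fix $A>0$ and bound the integral over $[0,A]$ by $2vA\sqrt{M_{tr}(t)}\int_0^A\rho_0 \le 2vA\sqrt{M_{tr}(t)}$, while on $[A,t]$ I discard $M_{tr}(t-s)\ge 0$ to get the crude bound $M_{tr}(t)-M_{tr}(t-s)\le M_{tr}(t)$, which yields $M_{tr}(t)\int_A^\infty\rho_0 = M_{tr}(t)(1-F_0(A))$. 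Dividing by $M_{tr}(t)$ gives
$$
\frac{1}{M_{tr}(t)}\int_0^t \rho_0(s)\big(M_{tr}(t)-M_{tr}(t-s)\big)\,ds \le \frac{2vA}{\sqrt{M_{tr}(t)}} + (1-F_0(A)).
$$
Letting $t\to\infty$ first (the square-root term vanishes since $M_{tr}\to\infty$) and then $A\to\infty$ (so $1-F_0(A)\to0$) forces the $\limsup$ of the left-hand side to be $0$.

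Combining the two estimates shows $M_{tr}(t)-(\rho_0\ast M_{tr})(t) = o(M_{tr}(t))$, which is the assertion. The only delicate point is the order of the two limits in the last step; everything else is a routine application of the cited properties of $M_{tr}$ and $F_0$.
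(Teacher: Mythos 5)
Your proof is correct, but it takes a recognizably different route from the paper's. The paper argues directly on the convolution: since $M_{tr}$ is increasing, $\rho_0 \ast M_{tr}(t) \leq M_{tr}(t)$, and for the lower bound it picks $S$ with $\int_0^S \rho_0 \geq 1-\epsilon$, writes $\rho_0 \ast M_{tr}(t) \geq \int_0^S \rho_0(s) M_{tr}(t-s)\,ds \geq (1-\epsilon) M_{tr}(t-S)$, and then invokes Proposition \ref{prop:props}(e) as a black box to get $(1-\epsilon)^2 M_{tr}(t)$ for large $t$. You instead bound the difference $M_{tr}(t) - \rho_0 \ast M_{tr}(t)$ from above, using the splitting identity with the tail term $M_{tr}(t)(1-F_0(t))$ and the quantitative estimate $M_{tr}(t)-M_{tr}(t-s) \leq 2vs\sqrt{M_{tr}(t)}$ from the \emph{proof} of part (e), combined with a two-scale cutoff in $A$. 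In effect you have inlined the proof of (e) rather than citing it; both arguments rest on the same cutoff-then-two-limits structure and the same underlying facts, and your handling of the order of limits (first $t\to\infty$ for fixed $A$, then $A\to\infty$) is sound. What your version buys: your decomposition and $\sqrt{M_{tr}}$ estimate are exactly what the paper deploys later in the proof of Lemma \ref{lemma:anti_zafarty}, so your argument upgrades immediately to a rate --- given tail information on $\rho_0$ (e.g.\ the $\alpha=2$ case), your bound
$$
M_{tr}(t) - \rho_0 \ast M_{tr}(t) \;\leq\; 2v\sqrt{M_{tr}(t)}\int_0^t s\,\rho_0(s)\,ds \;+\; M_{tr}(t)\int_t^\infty \rho_0(s)\,ds
$$
quantifies the error, whereas the paper's shorter proof of this lemma yields only the asymptotic equivalence and defers the quantitative statement to Lemma \ref{lemma:anti_zafarty}.
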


\begin{proof} It is clear that
$$
\rho_0 \ast M_{tr}(t) = \int_0^t \rho_0(s) M_{tr}(t-s) ds \leq M_{tr}(t)
$$
for all $t$, since $M_{tr}$ is strictly increasing (Proposition \ref{prop:props}, part (c)). 
Now let $\epsilon>0$. Let $S$ be so large that
$$
\int_0^S \rho_0(s) ds \geq 1-\epsilon.
$$
Then for $t \geq S$, 
$$
\rho_0 \ast M_{tr}(t) = 
\int_0^t \rho_0(s) M_{tr}(t-s) ds \geq \int_0^S \rho_0(s) M_{tr}(t-s) ds  \geq (1-\epsilon) M_{tr}(t-S).
$$
For $t$ sufficiently large, this is 
$
\geq (1-\epsilon)^2 M_{tr}(t)
$
since $M_{tr}(t-S) \sim M_{tr}(t)$ as $t \rightarrow \infty$ (Proposition \ref{prop:props}, part (e)). 
Since the above arguments hold for all $\epsilon>0$, 
the assertion follows.
\end{proof}

\begin{theorem}
\label{theorem:finite_variance}
If $\sigma^2 < \infty$, then 
$$
M_{tr}(t) \sim M_{eq}(t) \sim \sigma^2 v^2  \frac{t}{\taubar}
$$
as $t \rightarrow \infty$.
\end{theorem}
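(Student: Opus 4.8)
The plan is to reduce everything to the equilibrium asymptotics, which are already in hand, via the convolution identity (\ref{eq:smoothed_M_tr}) and Lemma \ref{lemma:rho_ast_M_{tr}_sim_M_{tr}}. The second asymptotic equivalence, $M_{eq}(t) \sim \sigma^2 v^2 t/\taubar$, is nothing but eq.\ (\ref{eq:finite_variance}), established at the start of Section 5; so the only real content is the first equivalence $M_{tr}(t) \sim M_{eq}(t)$. First I would read (\ref{eq:smoothed_M_tr}) as giving $(\rho_0 \ast M_{tr})(t)$ in terms of $M_{eq}(t)$ minus the correction term $2v^2 \int_0^t (1-F_0(s))\, s \, ds$, and the whole proof comes down to showing that this correction is $o(t)$.

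To control the correction, I would invoke Lemma \ref{lemma:F_0_properties}(b): since $\sigma^2 < \infty$, we have $1-F_0(s) = o(1/s)$, and hence the integrand $(1-F_0(s))\, s$ tends to $0$ as $s \to \infty$. A Cesàro-type averaging argument — if $g(s) \to 0$ then $\frac{1}{t}\int_0^t g(s)\, ds \to 0$, proved by splitting the integral at a point beyond which $|g| < \epsilon$ — then yields $\int_0^t (1-F_0(s))\, s \, ds = o(t)$. Combining this with (\ref{eq:finite_variance}) in the form $M_{eq}(t) = \sigma^2 v^2 t/\taubar + o(t)$, the right-hand side of (\ref{eq:smoothed_M_tr}) equals $\sigma^2 v^2 t/\taubar + o(t)$, so $(\rho_0 \ast M_{tr})(t) \sim \sigma^2 v^2 t/\taubar$. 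Finally, Lemma \ref{lemma:rho_ast_M_{tr}_sim_M_{tr}} gives $\rho_0 \ast M_{tr}(t) \sim M_{tr}(t)$, whence $M_{tr}(t) \sim \sigma^2 v^2 t/\taubar \sim M_{eq}(t)$, which is the claim.

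The step I expect to be the main obstacle — or at least the one requiring care — is the estimate on the correction integral. The tempting move is to identify $\int_0^t (1-F_0(s))\, s\, ds$ with half the second moment of $F_0$ and to hope it converges; but that second moment is governed by the third moment of $F$, which need not be finite merely because $\sigma^2 < \infty$, so the integral may genuinely diverge. The point is that we do not need boundedness, only sublinear growth $o(t)$, and this follows cleanly from the decay rate $1-F_0(s) = o(1/s)$ via the averaging argument rather than from any integrability of $(1-F_0(s))\, s$. Once this is recognized, the rest is routine bookkeeping with $o(t)$ terms.
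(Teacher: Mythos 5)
Your proof is correct and takes essentially the same route as the paper's, whose entire proof is a one-line citation of eq.\ (\ref{eq:smoothed_M_tr}), Lemma \ref{lemma:rho_ast_M_{tr}_sim_M_{tr}}, Lemma \ref{lemma:F_0_properties}(b), and eq.\ (\ref{eq:finite_variance}). The Ces\`aro argument you give for showing the correction term $2v^2\int_0^t (1-F_0(s))\, s\, ds$ is $o(t)$ --- together with your observation that this integral need not converge, only grow sublinearly --- is precisely the detail the paper leaves implicit.
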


\begin{proof}
This follows from eq.\ (\ref{eq:smoothed_M_tr}), together with Lemma \ref{lemma:rho_ast_M_{tr}_sim_M_{tr}},
Lemma \ref{lemma:F_0_properties}, part (b), and eq.\ (\ref{eq:finite_variance}).
\end{proof}

For power laws with exponent $\alpha\in(1,2)$, the smaller $\alpha-1$ (and hence the greater the first step of the equilibrium process), the more we might expect $M_{eq}$ to exceed $M_{tr}$. For fixed $\alpha$, the two MSDs are asymptotically proportional, but, indeed, the asymptotic ratio of $M_{tr}$ to $M_{eq}$ is $\alpha-1$, as seen in  
\cite{barkai1997levy,detcheverry2017generalized,Froemberg2013random}.
\begin{theorem} 
\label{theorem:algebraic_tail} Let
\begin{equation}
\label{eq:algebraic_tail}
1-F(t) \sim  \left( \frac{t}{t_0} \right)^{-\alpha}
\end{equation}
as $t \rightarrow \infty$, for some constant $t_0>0$. Then

\begin{enumerate}
\item[(a)] $
M_{tr} \sim (\alpha-1) M_{eq}
$ if $\alpha \in (1,2)$, and

\item[(b)] $
M_{tr} \sim M_{eq}
$ if $\alpha \geq 2$. 
\end{enumerate}
\end{theorem}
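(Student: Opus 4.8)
The plan is to reduce everything to the convolution identity (\ref{eq:smoothed_M_tr}) together with the asymptotic replacement $\rho_0\ast M_{tr}\sim M_{tr}$ from Lemma \ref{lemma:rho_ast_M_{tr}_sim_M_{tr}}, which gives
$$
M_{tr}(t)\sim M_{eq}(t)-2v^2\int_0^t(1-F_0(s))\,s\,ds\qquad\text{as }t\to\infty.
$$
A useful first move is to simplify the subtracted integral. Using $M_{eq}''=2v^2(1-F_0)$ from (\ref{eq:Meqpp}) and integrating by parts (with $M_{eq}(0)=0$), one finds $2v^2\int_0^t(1-F_0(s))\,s\,ds=t\,M_{eq}'(t)-M_{eq}(t)$, so the relation collapses to the clean form
$$
M_{tr}(t)\sim 2M_{eq}(t)-t\,M_{eq}'(t).
$$
Everything then reduces to the leading-order behavior of $M_{eq}$ and $M_{eq}'$.

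The case $\alpha>2$ in part (b) requires nothing new: here $\sigma^2<\infty$, so Theorem \ref{theorem:finite_variance} already yields $M_{tr}\sim M_{eq}\sim\sigma^2 v^2 t/\taubar$. For the remaining cases $\alpha\in(1,2]$ I would read $M_{eq}$ off Corollary \ref{corollary:M_eq_asymp} and obtain $M_{eq}'$ from (\ref{eq:M_{eq}_prime}). Since $1-F_0(s)=\frac{1}{\taubar}\int_s^\infty(1-F(u))\,du$ by (\ref{eq:def_F_equi}), the tail-integral equivalence already exploited in the proof of Theorem \ref{theorem:F_F_tilde} gives $1-F_0(s)\sim\frac{t_0^\alpha}{\taubar(\alpha-1)}s^{1-\alpha}$ for $1<\alpha<2$ and $1-F_0(s)\sim\frac{t_0^2}{\taubar}s^{-1}$ for $\alpha=2$. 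Feeding this into (\ref{eq:M_{eq}_prime}) and integrating with Lemma \ref{lemma:integrate_asymptotics} yields $M_{eq}'(t)\sim(3-\alpha)M_{eq}(t)/t$ when $1<\alpha<2$, hence $t\,M_{eq}'(t)\sim(3-\alpha)M_{eq}(t)$ and
$$
M_{tr}(t)\sim\bigl(2-(3-\alpha)\bigr)M_{eq}(t)=(\alpha-1)M_{eq}(t),
$$
which is (a). For $\alpha=2$ the same computation gives $M_{eq}'(t)\sim\frac{2v^2 t_0^2}{\taubar}\log t$ while $M_{eq}(t)\sim\frac{2v^2 t_0^2}{\taubar}\,t\log t$, so $t\,M_{eq}'(t)\sim M_{eq}(t)$ and $2M_{eq}-t\,M_{eq}'\sim M_{eq}$, giving $M_{tr}\sim M_{eq}$ and completing (b).

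The step demanding the most care is the cancellation $2M_{eq}-t\,M_{eq}'$ in the range $\alpha\in(1,2)$: both $M_{eq}(t)$ and $t\,M_{eq}'(t)$ grow like $t^{3-\alpha}$, so subtracting them is legitimate only because their leading coefficients are genuinely distinct, and the content of the theorem is precisely that the difference rescales the coefficient by the factor $\alpha-1$. A secondary technical point is the transfer of tail asymptotics from $1-F$ to $1-F_0$ and the subsequent use of Lemma \ref{lemma:integrate_asymptotics}: one must verify the divergence hypotheses (which hold because $2-\alpha>-1$), and in the $\alpha=2$ case replace the non-locally-integrable comparison function $s^{-1}$ near the origin by, say, $(1+s)^{-1}$, which affects neither the leading term nor the conclusion.
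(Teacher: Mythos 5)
Your proof is correct and follows essentially the same route as the paper's: both rest on identity (\ref{eq:smoothed_M_tr}) combined with Lemma \ref{lemma:rho_ast_M_{tr}_sim_M_{tr}}, the same case split (Theorem \ref{theorem:finite_variance} for $\alpha>2$; tail asymptotics of $1-F_0$ plus Lemma \ref{lemma:integrate_asymptotics} for $\alpha\in(1,2]$), and the same cancellation of two like-order terms with distinct leading coefficients. Your integration by parts, which recasts the subtracted integral as $t M_{eq}'(t)-M_{eq}(t)$ so that the cancellation reads $2-(3-\alpha)=\alpha-1$, is a tidy but cosmetic repackaging of the paper's explicit coefficient subtraction, since the asymptotics of $M_{eq}'$ are obtained from exactly the same computation the paper applies to $\int_0^t (1-F_0(s))\,s\,ds$.
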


\begin{proof} 
(a) Let  $\alpha \in (1,2)$.
From (\ref{eq:algebraic_tail}),
$$
1- F_0(t) = \frac{1}{\taubar} \int_t^\infty (1-F(s)) ds \sim   \frac{t_0}{\taubar (\alpha-1)} \left( \frac{t}{t_0}
\right)^{-\alpha+1}.
$$
This yields, using Lemma \ref{lemma:integrate_asymptotics}, the asymptotic behavior of the second term on the right-hand side of eq.\
(\ref{eq:smoothed_M_tr}): 
$$
2 v^2 \int_0^t \left( 1 - F_0(s) \right)s ~\! ds \sim  \frac{2 t_0 (v t_0)^2}{\taubar (\alpha-1)(3-\alpha)} 
\left( \frac{t}{t_0} \right)^{3-\alpha}.
$$
From eq.\ (\ref{generic}) and Theorem \ref{theorem:F_F_tilde},
$$
M_{eq}(t) \sim  \frac{t_0/(\alpha-1)}{\taubar}~\! \frac{2 (v t_0)^2}{(2-\alpha)(3-\alpha)} \left( \frac{t}{t_0} \right)^{3-\alpha}.
$$
Therefore, by eq.\ (\ref{eq:smoothed_M_tr}) and Lemma \ref{lemma:rho_ast_M_{tr}_sim_M_{tr}}, 
$$
M_{tr} \sim   \frac{t_0}{\taubar} \frac{2 (v t_0)^2}{(\alpha-1)(2-\alpha)(3-\alpha)} \left( \frac{t}{t_0} \right)^{3-\alpha}
-  \frac{t_0}{\taubar} \frac{2 (v t_0)^2}{(\alpha-1)(3-\alpha)} 
\left( \frac{t}{t_0} \right)^{3-\alpha} \sim
 (\alpha-1) M_{eq}.
$$
(b) \ For $\alpha>2$, we have $\sigma^2 < \infty$, and therefore the assertion
follows from Theorem \ref{theorem:finite_variance}. For $\alpha=2$, 
$$
1- F_0(t) = \frac{1}{\taubar} 
\int_t^\infty (1-F(s)) ds \sim   \frac{t_0^2}{\taubar t}
$$
and therefore, using Lemma \ref{lemma:integrate_asymptotics}, 
\begin{equation}
\label{eq:extra_term}
2 v^2 \int_0^t \left( 1 - F_0(s) \right)s ~\! ds \sim 2  (v t_0)^2 \frac{t}{\taubar}.
\end{equation}
By eq.\ (\ref{alpha2}) and Theorem \ref{theorem:F_F_tilde},
$$
M_{eq}(t) \sim 2  (v t_0)^2 \frac{t}{\taubar} \log \frac{t}{t_0}.
$$
Therefore, by eq.\ (\ref{eq:smoothed_M_tr}) and Lemma \ref{lemma:rho_ast_M_{tr}_sim_M_{tr}}, 
$M_{tr} (t) \sim M_{eq}(t)$.
\end{proof}

For specific examples, (\ref{eq:smoothed_M_tr}) enables us to compute $\rho_0 \ast M_{tr}$, but not $M_{tr}$. We know from 
Lemma \ref{lemma:rho_ast_M_{tr}_sim_M_{tr}} that the
leading-order asymptotic behavior of $\rho_0 \ast M_{tr}$
is the same as that of $M_{tr}$. However, in special cases, knowledge of 
$\rho_0 \ast  M_{tr}$ yields more than just the leading-order asymptotic
behavior. We give the following example.

\begin{lemma}
\label{lemma:anti_zafarty}
Assume that
$$
1 - F(t) \sim \left( \frac{t}{t_0} \right)^{-2} ~~~ \mbox{as $t \rightarrow \infty$},
$$
for some $t_0>0$. 
Then there exists a constant $C>0$  such that for all sufficiently large $t$,
$$
0 \leq M_{tr}(t) - \rho_0 \ast M_{tr}(t) \leq C \sqrt{t(\log t)^3}.
$$
\end{lemma}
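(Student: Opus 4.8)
The plan is to prove the two inequalities separately, the lower bound being essentially free and the upper bound carrying all the content. For the lower bound $0 \le M_{tr}(t) - \rho_0 \ast M_{tr}(t)$, I would simply invoke the inequality $\rho_0 \ast M_{tr}(t) \le M_{tr}(t)$ established in the opening line of the proof of Lemma \ref{lemma:rho_ast_M_{tr}_sim_M_{tr}}; it holds for every $t$ because $M_{tr}$ is strictly increasing (Proposition \ref{prop:props}(c)) and $\rho_0$ is a probability density with $\int_0^\infty \rho_0(s)\,ds = 1$.

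For the upper bound, I would first use $\int_0^\infty \rho_0 = 1$ to split the difference into an increment term and a boundary term:
$$
M_{tr}(t) - \rho_0 \ast M_{tr}(t) = \int_0^t \rho_0(s)\left[M_{tr}(t) - M_{tr}(t-s)\right] ds + M_{tr}(t)\left(1-F_0(t)\right).
$$
The boundary term is harmless: for $\alpha = 2$ we have $M_{tr}(t) \sim M_{eq}(t) \sim 2(vt_0)^2 (t/\taubar)\log(t/t_0)$ by Theorem \ref{theorem:algebraic_tail}(b), while $1-F_0(t) \sim t_0^2/(\taubar t)$ (computed in that same proof), so $M_{tr}(t)(1-F_0(t)) = O(\log t)$, negligible against $\sqrt{t(\log t)^3}$.

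The heart of the matter is the increment term, which I would control using the inequality extracted from the proof of Proposition \ref{prop:props}(e), valid for all $0 \le s \le t$:
$$
M_{tr}(t) - M_{tr}(t-s) \le 2v s \sqrt{M_{tr}(t)}.
$$
Substituting and pulling the $t$-dependent factor out gives
$$
\int_0^t \rho_0(s)\left[M_{tr}(t) - M_{tr}(t-s)\right] ds \le 2v\sqrt{M_{tr}(t)} \int_0^t s\,\rho_0(s)\,ds.
$$
Since $\rho_0(s) = (1-F(s))/\taubar \sim t_0^2/(\taubar s^2)$, the integrand $s\,\rho_0(s)$ is asymptotic to $t_0^2/(\taubar s)$, so by Lemma \ref{lemma:integrate_asymptotics} the integral diverges only logarithmically, $\int_0^t s\,\rho_0(s)\,ds \sim (t_0^2/\taubar)\log t = O(\log t)$. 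Combining this with $\sqrt{M_{tr}(t)} = O(\sqrt{t\log t})$ yields a bound of order $\sqrt{t\log t}\cdot \log t = \sqrt{t(\log t)^3}$, which is exactly the claimed rate.

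I expect the only real care to be bookkeeping rather than conceptual. The increment inequality must be read off from the Cauchy--Schwarz estimate inside the proof of Proposition \ref{prop:props}(e) and checked to hold uniformly for $s \in [0,t]$, including the endpoint $s=t$ where $M_{tr}(t-s)=M_{tr}(0)=0$; this is consistent since $M_{tr}(t)\le v^2t^2$. The logarithmic growth of $\int_0^t s\,\rho_0(s)\,ds$ should be justified rigorously through Lemma \ref{lemma:integrate_asymptotics} rather than by the heuristic tail asymptotics, and one should confirm that the two soft estimates $M_{tr}(t)=O(t\log t)$ and $1-F_0(t)=O(1/t)$ combine so that the boundary term remains strictly subdominant to the increment term.
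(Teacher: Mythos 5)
Your proof is correct and follows essentially the same route as the paper's: the same decomposition of $M_{tr}(t) - \rho_0 \ast M_{tr}(t)$ into an increment term plus a boundary term (the paper writes $\int_t^\infty \rho_0(s)\,ds$ where you write $1-F_0(t)$), the same increment inequality $M_{tr}(t)-M_{tr}(t-s) \leq 2vs\sqrt{M_{tr}(t)}$ lifted from the proof of Proposition \ref{prop:props}(e), and the same asymptotic estimates $\int_0^t s\rho_0(s)\,ds \sim (t_0^2/\taubar)\log t$ and $M_{tr}(t) = O(t\log t)$ combining to give the $\sqrt{t(\log t)^3}$ rate. Your explicit appeal to Lemma \ref{lemma:integrate_asymptotics} for the logarithmic integral and your check at the endpoint $s=t$ are careful touches, but the argument is the paper's argument.
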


\begin{proof}
We noted in the proof of Lemma \ref{lemma:rho_ast_M_{tr}_sim_M_{tr}} that
$\rho_0 \ast M_{tr}(t)$ is a lower bound on $M_{tr}(t)$, so 
$0 \leq M_{tr}(t) - \rho_0 \ast M_{tr}(t)$.
We will now find an upper bound on 
$M_{tr}(t) - \rho_0 \ast M_{tr}(t)$: 
\begin{eqnarray}
\nonumber
M_{tr}(t) - \rho_0 \ast M_{tr}(t)  
&=& 
\int_0^\infty \rho_0(s) M_{tr}(t) ds - \int_0^t \rho_0(s) M_{tr}(t-s) ds \\
\label{eq:intermediate_stop}
&=& \int_0^t \rho_0(s) (M_{tr}(t) - M_{tr}(t-s)) ds + \int_t^\infty \rho_0(s) ds ~M_{tr}(t). ~~~~~~~
\end{eqnarray}
In the proof of Proposition \ref{prop:props}, we saw:
$$
M_{tr}(t) - M_{tr}(t-s)<
2 vs ~\! \sqrt{M_{tr}(t)}.
$$
Using this in (\ref{eq:intermediate_stop}), we find the upper bound
\begin{equation}
\label{eq:stop_again}
 2 v \sqrt{M_{tr}(t)} \int_0^t s \rho_0(s) ds  +  \int_t^\infty \rho_0(s) ds ~M_{tr}(t) 
\end{equation}
By hypothesis,
\begin{equation}
\label{eq:zero}
\rho_0(t) \sim \frac{1}{\taubar} \left( \frac{t}{t_0} \right)^{-2}
\end{equation}
as $t\rightarrow\infty$.
This implies 
\begin{equation}
\label{eq:one}
\int_0^t s \rho_0(s) ds \sim \frac{t_0^2}{\taubar} \log t.
\end{equation}
We know from Theorem \ref{theorem:algebraic_tail} that
\begin{equation}
\label{eq:two}
M_{tr}(t) \sim C_1 t \log t
\end{equation}
for some positive constant $C_1$.  
Furthermore, from (\ref{eq:zero}), 
\begin{equation}
\label{eq:three}
\int_t^\infty \rho_0(s) ds \sim \frac{t_0^2}{\taubar} ~\! t^{-1}.
\end{equation}
Using (\ref{eq:one})--(\ref{eq:three}) in 
(\ref{eq:stop_again}), we obtain the assertion. 
\end{proof}

As was the case for equilibrium L\'evy walks, even exact knowledge of the tails and means of step durations doesn't suffice to determine the linear terms of the asymptotics of the transitional MSDs:
\begin{theorem} 
Let a distribution $F$, with mean $\taubar$, satisfy
$$
1-F(t) \sim \left( \frac{t}{t_0} \right)^{-2}
$$
as $t\rightarrow\infty$, for some $t_0>0$, and let $A>0$ be given. Then there exists a distribution $\tilde F$ with the same mean, $\taubar$, and with
$$
F(t) = \tilde{F}(t) ~~~\mbox{for all $t \geq A$},
$$
such that the corresponding mean square
displacements $M_{tr}$ and $\tilde{M_{tr}}$ differ by at least $O(t)$.
\end{theorem}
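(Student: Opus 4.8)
The plan is to reuse the construction of Theorem~\ref{thm:A} and then transport the resulting linear discrepancy in $M_{eq}$ over to $M_{tr}$ through the convolution identity (\ref{eq:smoothed_M_tr}). First I would take $\tilde F$ exactly as produced in the proof of Theorem~\ref{thm:A}: a continuous distribution function agreeing with $F$ on $[A,\infty)$, with $\int_0^A(F(u)-\tilde F(u))\,du=0$ (so that the means agree), but with $\int_0^A(F(u)-\tilde F(u))u\,du\neq 0$. The computation in that proof then gives, for $t\geq A$, that $M_{eq}(t)-\tilde M_{eq}(t)=c't+O(1)$ with $c'=-\frac{2v^2}{\taubar}\int_0^A(F(u)-\tilde F(u))u\,du\neq0$. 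Since $\tilde F$ shares the tail and mean of $F$, Theorem~\ref{theorem:algebraic_tail} guarantees $M_{tr}(t)\sim\tilde M_{tr}(t)\sim C_1 t\log t$ for some constant $C_1>0$, a fact I will need below.

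Next I would write (\ref{eq:smoothed_M_tr}) for both processes and subtract. Because $F_0-\tilde F_0$ is supported on $[0,A]$ (it vanishes for $t\geq A$ by the equal-mean condition, the two processes sharing $\taubar$), the two ``$\int_0^t(1-F_0)s\,ds$'' terms differ only by a constant for $t\geq A$. Hence $\rho_0\ast M_{tr}-\tilde\rho_0\ast\tilde M_{tr}=c't+O(1)$. I would then split the left-hand side as $(\rho_0-\tilde\rho_0)\ast M_{tr}+\tilde\rho_0\ast D$, where $D=M_{tr}-\tilde M_{tr}$.

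The kernel $\rho_0-\tilde\rho_0=(\tilde F-F)/\taubar$ is supported on $[0,A]$ and integrates to $0$, so subtracting a vanishing multiple of $M_{tr}(t)$ gives $((\rho_0-\tilde\rho_0)\ast M_{tr})(t)=\frac1\taubar\int_0^A(\tilde F-F)(s)\,[M_{tr}(t-s)-M_{tr}(t)]\,ds$. Bounding the bracket by the increment estimate $M_{tr}(t)-M_{tr}(t-s)<2vs\sqrt{M_{tr}(t)}$ from the proof of Proposition~\ref{prop:props}, and using $M_{tr}\sim C_1 t\log t$, shows this term is $O(\sqrt{t\log t})=o(t)$. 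Combining the last three displays yields $\tilde\rho_0\ast D\sim c't$ with $c'\neq0$.

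It remains to deduce that $D$ itself is not $o(t)$, and I expect this deconvolution step to be the main obstacle, since $\tilde\rho_0$ has infinite first moment when the tail exponent equals $2$. Nevertheless it goes through using only that $\tilde\rho_0$ is a probability density whose tail vanishes: if $D(t)=o(t)$, then fixing $\epsilon>0$, choosing $U$ with $|D(u)|\le\epsilon u$ for $u\ge U$, and splitting $\tilde\rho_0\ast D$ at $t-s=U$ gives $|(\tilde\rho_0\ast D)(t)|\le\epsilon t+B\,(1-\tilde F_0(t-U))$, where $B=\max_{[0,U]}|D|$. Since $1-\tilde F_0(t)\to0$, letting $t\to\infty$ and then $\epsilon\to0$ forces $\tilde\rho_0\ast D=o(t)$, contradicting $\tilde\rho_0\ast D\sim c't$. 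Therefore $\limsup_{t\to\infty}|M_{tr}(t)-\tilde M_{tr}(t)|/t>0$, which is the asserted discrepancy of at least order $t$.
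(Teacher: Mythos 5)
Your proof is correct, but it takes a genuinely different route from the paper's. Both arguments start from the $\tilde F$ constructed in Theorem \ref{thm:A} and from the identity (\ref{eq:smoothed_M_tr}); the difference is in how the linear discrepancy is transported from $M_{eq}$ to $M_{tr}$. The paper invokes Lemma \ref{lemma:anti_zafarty} together with (\ref{eq:extra_term}) to write, for each of $F$ and $\tilde F$ separately, $M_{tr}(t) = M_{eq}(t) - 2(vt_0)^2 t/\taubar + o(t)$, and then simply subtracts; this yields the stronger conclusion $M_{tr}(t)-\tilde M_{tr}(t) = c't + o(t)$, i.e.\ the exact slope of the discrepancy. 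You never use Lemma \ref{lemma:anti_zafarty} or (\ref{eq:extra_term}); instead you subtract the two convolution identities, exploit the fact that $\rho_0-\tilde\rho_0=(\tilde F-F)/\taubar$ is compactly supported with zero integral (so the increment bound $M_{tr}(t)-M_{tr}(t-s)\le 2vs\sqrt{M_{tr}(t)}$ is only needed on the fixed window $[0,A]$, where it gives $O(\sqrt{t\log t})=o(t)$), and then close with a deconvolution-by-contradiction argument showing that $D=o(t)$ would force $\tilde\rho_0\ast D=o(t)$. That last step is the novel ingredient, and your handling of it is right: the splitting at $t-s=U$ uses only that $\tilde\rho_0$ has total mass one and that $1-\tilde F_0(t)\to0$, so the infinite first moment of $\tilde\rho_0$ (which you correctly flag as the apparent danger when the tail exponent is $2$) is harmless. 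The trade-off: the paper's route is shorter given that Lemma \ref{lemma:anti_zafarty} (with its $O(\sqrt{t(\log t)^3})$ bound) has already been established, and it pins down the asymptotic coefficient $c'$ of the difference; your route bypasses that machinery with softer, more self-contained estimates, but delivers only $\limsup_{t\to\infty}|M_{tr}(t)-\tilde M_{tr}(t)|/t \ge |c'|>0$ rather than the full linear asymptotics --- which is nevertheless exactly what the theorem as stated requires.
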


\begin{proof} 
Let $\tilde F$ be as in Theorem \ref{thm:A}.
By Lemma \ref{lemma:anti_zafarty} and
eqs.\ (\ref{eq:smoothed_M_tr}) and (\ref{eq:extra_term}), 
$$
M_{tr}(t)  =  M_{eq}(t) - 2 (v t_0)^2 \frac{t}{\taubar} + o(t), 
$$
and similarly
$$
\tilde{M}_{tr}(t)  =  \tilde{M}_{eq}(t) - 2 (v t_0)^2 \frac{t}{\taubar} + o(t), 
$$
as $t \rightarrow \infty$. Since, by Theorem \ref{thm:A},  $M_{eq}(t)$ and $\tilde{M}_{eq}(t)$
differ by $\geq O(t)$, so do $M_{tr}$ and $\tilde {M}_{tr}$.
\end{proof}

For the canonical power laws of Section \ref{subsec:power}, 
Figure \ref{fig:MC_F_ALPHA_TRANSITIONAL} illustrates  our results by showing, for six different values of 
$\alpha$, the exact MSD (solid) and the leading order asymptotic approximation (dots) for the equlibrium (black) and transitional 
(dots) cases.

\begin{figure}[ht!]
\begin{center}
\includegraphics[scale=0.6]{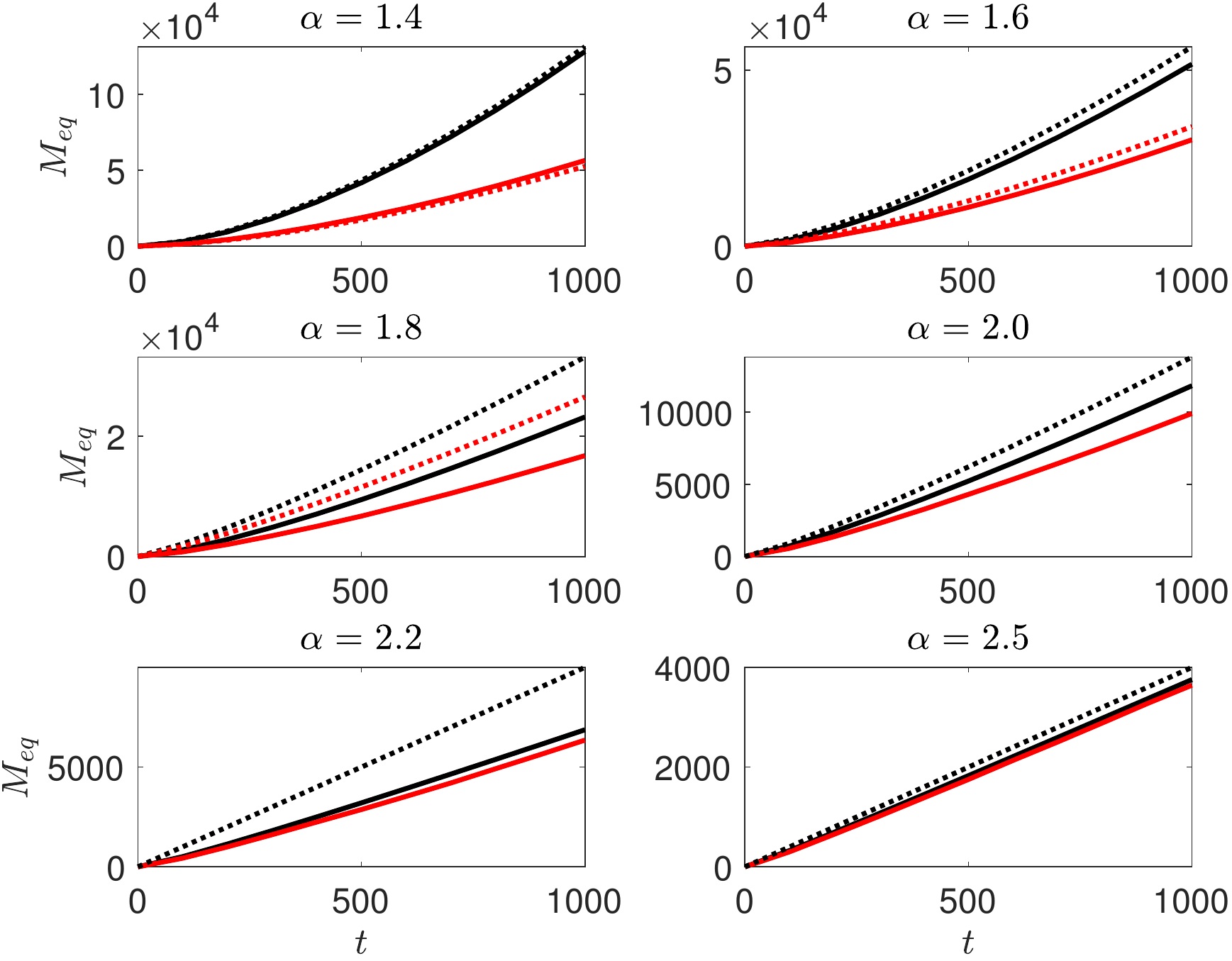}
\end{center}
\caption{Red: $M_{tr}$ (solid, computed by Monte Carlo simulation) and leading asymptotic approximation for $M_{tr}$ (dotted). 
Black: $M_{eq}$ (solid, computed analytically) and leading asymptotic approximation for $M_{eq}$ (dotted). For $\alpha \geq 2$, the
leading asymptotic approximations for $M_{tr}$ and $M_{eq}$ are identical, and are shown as black dotted curves.}
\label{fig:MC_F_ALPHA_TRANSITIONAL}
\end{figure}

\section{Application to free molecular flow in a planar channel} We studied
the flow of a rarefied gas in an infinite planar channel in \cite{BGT}. Assuming that length units are chosen so that the thickness of the channel is 1, we take the flow domain to be
$$
\{ (x_1,x_2,z) \in \R^3 ~:~ 0 < z< 1\}.
$$
We refer to $x=(x_1,x_2)$ as the ``horizontal" coordinates, and to $z$ as the ``vertical" coordinate.
As in \cite{BGT}, we consider
the projection of a particle trajectory into the $(x_1,x_2)$-plane.
Gas molecules are assumed not to interact with each other, to travel at constant
velocities in the interior of the channel, and to undergo random reflections at the walls, described by Mawell's boundary conditions \cite[pp.\ 118 ff]{Cercignani:112117} (the accomodation coefficient in \cite{BGT} is arbitrary, but we take it to be 1 here for simplicity). Specifically, a gas molecule that hits the 
lower wall $z=0$ re-emerges with a random velocity $(V,W)$,  $V \in \R^2$ and $W>0$, 
with density 
\begin{equation}
\label{eq:diffuse_reflection}
\frac{ e^{-|\nu|^2/c^2} }{ \pi c^2}  ~~ \frac{2\omega e^{-\omega^2/c^2}}{c^2} , ~~~ \nu \in \R^2, ~~ \omega>0. 
\end{equation}
The parameter $c$ equals
$\sqrt{2kT/m}$, with $T$ $=$ absolute temperature, $m$ $=$ mass per 
gas molecule, and $k$ $=$ Boltzmann constant; dimensionally, $c$ is a speed. The reflection law at the upper wall, $z=1$, is analogous, with the 
sign of $W$ reversed.

The time $\tau$ between a collision with a wall and the next collision with the opposite wall equals $1/|W|$.
The distribution function of $\tau$ is
$$
F(t) = 
P(\tau \leq t) = P \left( |W| \geq \frac{1}{t} \right)  =   \int_{1/t}^\infty \frac{2\omega e^{-\omega^2/c^2}}{c^2} d\omega = e^{-1/(ct)^2}.
$$
Note also that 
\begin{equation}
\label{BGT_F}
F(t) = 1 - \frac{1}{( c t)^2} + \frac{1}{2(ct)^4} - \ldots ~~~ \mbox{as $t \rightarrow \infty$.}
\end{equation}
The mean
of $\tau$ is
\begin{equation}
\label{BGT_taubar}
\taubar =  \int_0^\infty \left( 1-e^{-1/(ct)^2} \right) dt =  \frac{\sqrt{\pi}}{c}.
\end{equation}
The  two components of the horizontal velocity  $V$  are Gaussians with mean zero and variance $c^2/2$, independent
of each other and of $W$. Therefore
$$
v^2 = E(|V|^2) = E(V_1^2 + V_2^2) = c^2.
$$

From Proposition \ref{propode}, $M_{eq}$ can be expressed in terms of the error function erf
 and the exponential integral
$\mbox{Ei}$.
We omit the unwieldy formula, which however makes it easy to find asymptotic expansions of $M_{eq}$ 
to arbitrary accuracy. For instance, we find
\begin{equation}
\label{eq:BGT_M_eq}
M_{eq}(t) = \frac{2ct \log(ct)}{ \sqrt{\pi}} + \frac{1-\gamma}{\sqrt{\pi}}~c t  +  \frac{2}{3} + O \left( \frac{1}{t} \right)  ~~~ \mbox{as $t \rightarrow \infty$}, 
\end{equation}
where $\gamma \approx 0.577$ is the Euler-Mascheroni constant. Using eq.\ (\ref{eq:smoothed_M_tr}) 
and Lemma  \ref{lemma:anti_zafarty}, we also conclude
\begin{equation}
\label{eq:BGT_M_tr}
M_{tr}(t) =  \frac{2ct \log(ct)}{ \sqrt{\pi}} - \frac{1+ \gamma}{\sqrt{\pi}}~c t  + O \left( t^{1/2} (\log t)^{3/2} \right)
\end{equation}
for the transitional mean square displacement. 

It is interesting to compare (\ref{eq:BGT_M_eq}) and (\ref{eq:BGT_M_tr}) with 
the main result of \cite{BGT}. The notation in \cite{BGT} differs from that
used here in several ways. In particular, 
in the limit studied in \cite{BGT}, the channel width tends to zero while time tends to infinity.
However, 
 it is not difficult to show that the result of \cite{BGT}, translated
 into the notation used here, predicts that at time $t$, the 
distribution of the location of a particle starting at the origin at time $0$ will be approximately bivariate Gaussian with 
variance 
$$
\frac{ ct \log(ct)}{\sqrt{\pi}}.
$$
Comparing with (\ref{eq:BGT_M_eq}) we see the ``doubling effect" previously observed by others 
\cite{Balint_Dolgopyat,barkai1997levy,Zaburdaev2015levy} in which the variance of the long-time Gaussian approximation is half the leading-order MSD approximation. 

\section{Summary}

There is considerable interest in the MSD in L\'evy walks due to the importance of this quantity in a wide variety of application areas. The leading-order behavior of MSDs for L\'evy walks with power law distributions of the flight segment duration is asymptotically determined by the exponent of the power law, up to a constant of proportionality. Given a mean duration, the constant of proportionality is determined as well. However, the quality of the asymptotic approximation depends delicately on the exact form of the distribution. The dependence is especially sensitive for power laws with exponents near 2. At $\alpha=2$, the leading-order term of the MSD at time $t$ is proportional to $t\log t$, but the next-order term is usually proportional to $t$, with a constant of proportionality which depends on the entire distribution. For $\alpha$ near 2, the sub-leading-order terms can also be close to the leading-order terms for values of $t$ of physical interest.

We have derived exact, closed-form expressions, valid for all time, for the MSDs for particular choices of the power laws and for a power law perturbed by a logarithmic factor. These examples illustrate the dependence of the accuracy of the asymptotics on the entire distribution. We have also established robustness of the MSD asymptotics in the sense that power law bounds on the distribution functions imply power law bounds on the equilibrium MSDs. Finally, we have proved that for power laws with $\alpha$ between 1 and 2, the equilibrium and transitional MSDs are asymptotically proportional, with constant of proportionality $\alpha-1$.

\section*{Acknowledgment} The second author thanks the Courant Institute for hosting him as a Visiting Scholar.

\pagebreak


\end{document}